\newcommand*{\vcenteredhbox}[1]{\begingroup
\setbox0=\hbox{#1}\parbox{\wd0}{\box0}\endgroup}
\DeclareMathOperator*{\argmax}{arg\,max}
\begin{document}
\title{A Hierarchical Game-Theoretic Decision-Making for Cooperative Multi-Agent Systems Under the Presence of Adversarial Agents}

  
\renewcommand{\shorttitle}{SIG Proceedings Paper in LaTeX Format}

\author{Qin Yang}
\authornote{Contact Dr. Yang with his personal email: RickYang2014@gmail.com}
\orcid{0000-0001-5342-1798}
\affiliation{%
  \institution{Department of Computer Science, University of Georgia}
  \streetaddress{415 Boyd Research and Education Center
University of Georgia}
  \city{Athens} 
  \state{Georgia, U.S.} 
  \postcode{30602-7404}
}
\email{qy03103@uga.edu}

\author{Ramviyas Parasuraman}
\affiliation{%
  \institution{Department of Computer Science, University of Georgia}
  \streetaddress{415 Boyd Research and Education Center
University of Georgia}
  \city{Athens} 
  \state{Georgia, U.S.} 
  \postcode{30602-7404}
}
\email{ramviyas@uga.edu}







\begin{abstract}
Underlying relationships among Multi-Agent Systems (MAS) in hazardous scenarios can be represented as Game-theoretic models. 
This paper proposes a new hierarchical network-based model called Game-theoretic Utility Tree (GUT), which decomposes high-level strategies into executable low-level actions for cooperative MAS decisions. It combines with a new payoff measure based on agent needs for real-time strategy games.
We present an Explore game domain, where we measure the performance of MAS achieving tasks from the perspective of balancing the success probability and system costs.
We evaluate the GUT approach against state-of-the-art methods that greedily rely on rewards of the composite actions.
Conclusive results on extensive numerical simulations indicate that GUT can organize more complex relationships among MAS cooperation, helping the group achieve challenging tasks with lower costs and higher winning rates.
Furthermore, we demonstrated the applicability of the GUT using the simulator-hardware testbed - Robotarium.
The performances verified the effectiveness of the GUT in the real robot application and validated that the GUT could effectively organize MAS cooperation strategies, helping the group with fewer advantages achieve higher performance.

\end{abstract}

%
%
\begin{CCSXML}
<ccs2012>
 <concept>
  <concept_id>10010520.10010553.10010562</concept_id>
  <concept_desc>Computer systems organization~Embedded systems</concept_desc>
  <concept_significance>500</concept_significance>
 </concept>
 <concept>
  <concept_id>10010520.10010575.10010755</concept_id>
  <concept_desc>Computer systems organization~Redundancy</concept_desc>
  <concept_significance>300</concept_significance>
 </concept>
 <concept>
  <concept_id>10010520.10010553.10010554</concept_id>
  <concept_desc>Computer systems organization~Robotics</concept_desc>
  <concept_significance>100</concept_significance>
 </concept>
 <concept>
  <concept_id>10003033.10003083.10003095</concept_id>
  <concept_desc>Networks~Network reliability</concept_desc>
  <concept_significance>100</concept_significance>
 </concept>
</ccs2012>  
\end{CCSXML}

\ccsdesc[500]{Computer systems organization~Embedded systems}
\ccsdesc[300]{Computer systems organization~Redundancy}
\ccsdesc{Computer systems organization~Robotics}
\ccsdesc[100]{Networks~Network reliability}

\keywords{Multi-Agent Systems, Game-Theoretic, Hierarchical Decomposition, Agent Needs, Cooperative, Adversaries}

\maketitle

\section{Introduction}

Multi-Agent Systems (MAS) \cite{wooldridge2009introduction} that cooperate with each other show \textit{Distributed Intelligence}, which refers to systems of entities working together to reason, plan, solve problems, think abstractly, comprehend ideas and language, and learn \cite{parker2007distributed}.
Here, an individual agent is aware of other group members and actively shares and integrates its needs, goals, actions, plans, and strategies to achieve a common goal and benefit the entire group \cite{yang2019self,yang2022self}. They can maximize global system utility and guarantee sustainable development for each group member \cite{shen2004degree}.

Systems with a wide variety of agent heterogeneity and communication abilities can be studied, and collaborative and adversarial issues can also be combined in a real-time situation \cite{stone2000multiagent}. Under the presence of adversarial agents, opponents can prevent agents from achieving global and local tasks, even impair individual or system necessary capabilities or normal functions \cite{jun2003path}.
Combining Multi-Agent cooperative decision-making and robotics disciplines, researchers developed the \textit{Adversarial Robotics,} focusing on autonomous agents operating in adversarial environments \cite{yehoshua2015adversarial}. 
From the agent's needs \cite{yang2020hierarchical} and motivations perspective, we can classify an \textbf{Adversary} into two general categories: \textbf{Intentional adversary} (such as an enemy agent, which consciously and actively impairs the agent's needs and capabilities as depicted in Fig.~\ref{fig:overview}) and \textbf{Unintentional adversary} (like obstacles, which passively threaten agent abilities). In this paper, we focus on the Multi-Agent tasks in the presence of adversarial agents (physical and intentional adversaries present at random locations in the environment).

\begin{figure}[t]
\centering
\begin{minipage}[b]{0.57\linewidth}
\setlength{\abovecaptionskip}{3.5pt}
\setlength{\belowcaptionskip}{5pt}
\vcenteredhbox{\includegraphics[width=1\columnwidth]{./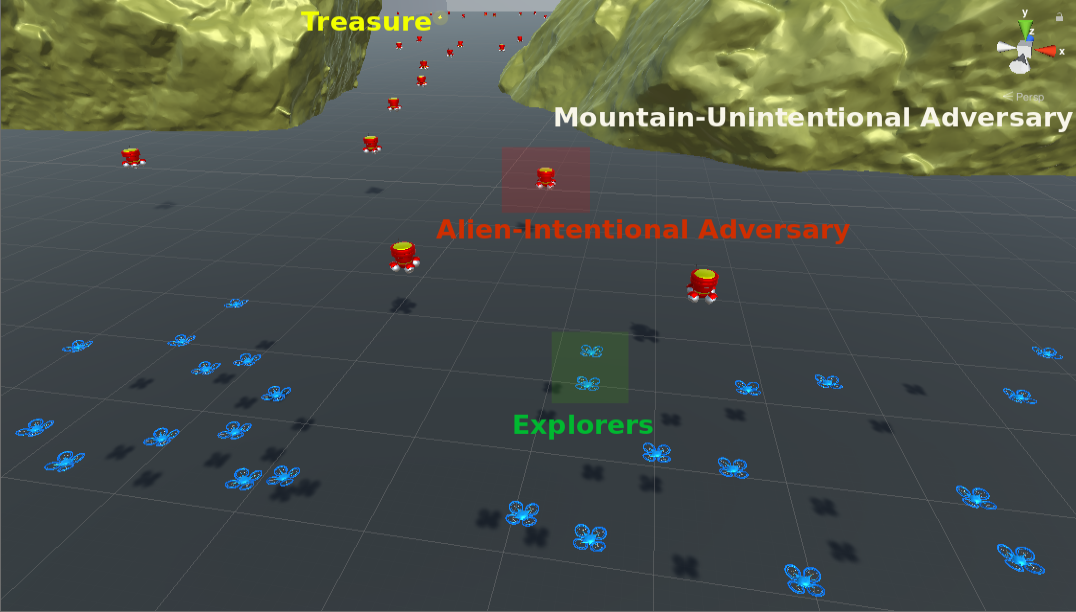}\vspace{-1.5pt}}
\caption*{\small Explore Domain}
\label{fig: explore_game}
\end{minipage}
\hspace{0mm}
\begin{minipage}[b]{0.18\linewidth}
\setlength{\abovecaptionskip}{3.5pt}
\setlength{\belowcaptionskip}{3.5pt}
\includegraphics[width=1\textwidth]{./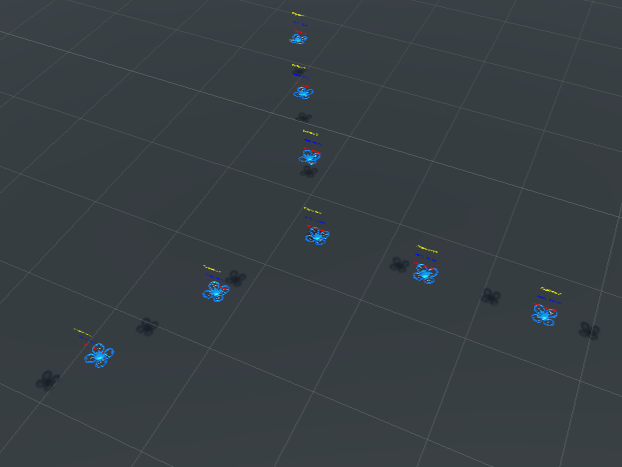}\vspace{0pt}
\caption*{\small{Patrolling}}
\label{fig: patroling_formation}
\includegraphics[width=1\textwidth]{./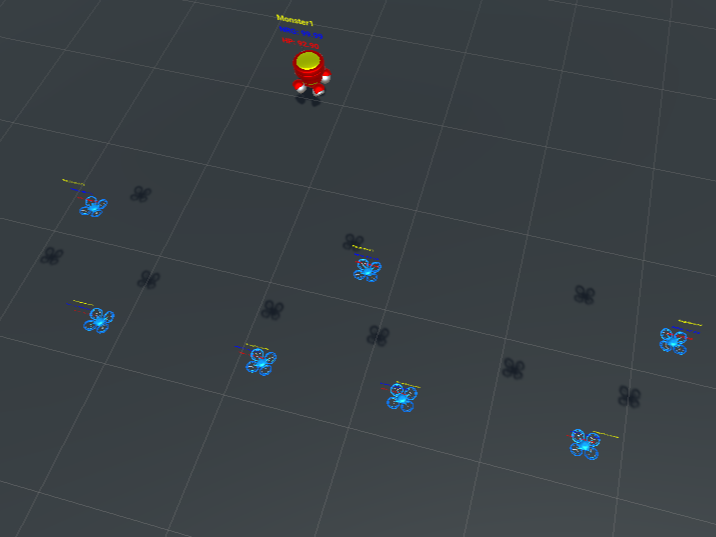}\vspace{0pt}
\caption*{\small{Attacking}}
\label{fig: attacking_formation}
\end{minipage}
\begin{minipage}[b]{0.18\linewidth}
\setlength{\abovecaptionskip}{3.5pt}
\setlength{\belowcaptionskip}{3.5pt}
\includegraphics[width=1\textwidth]{./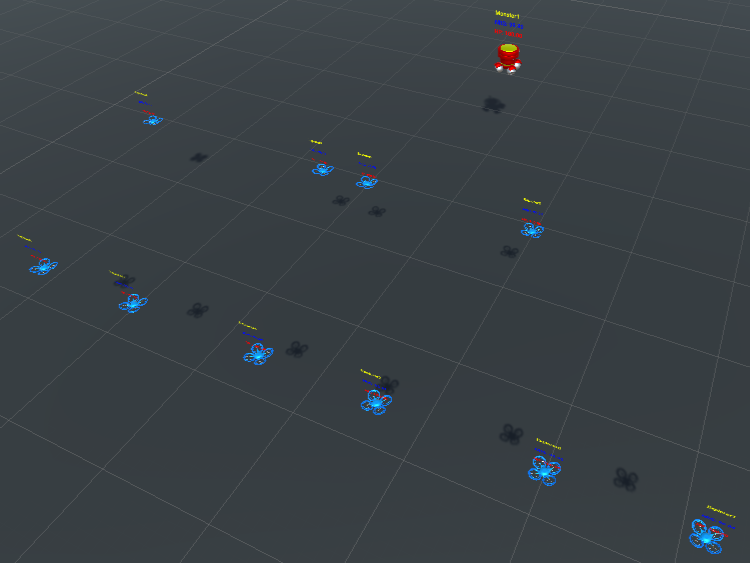}\vspace{0pt}
\caption*{\small{Defending}}
\label{fig: defending_formation}
\includegraphics[width=1\textwidth]{./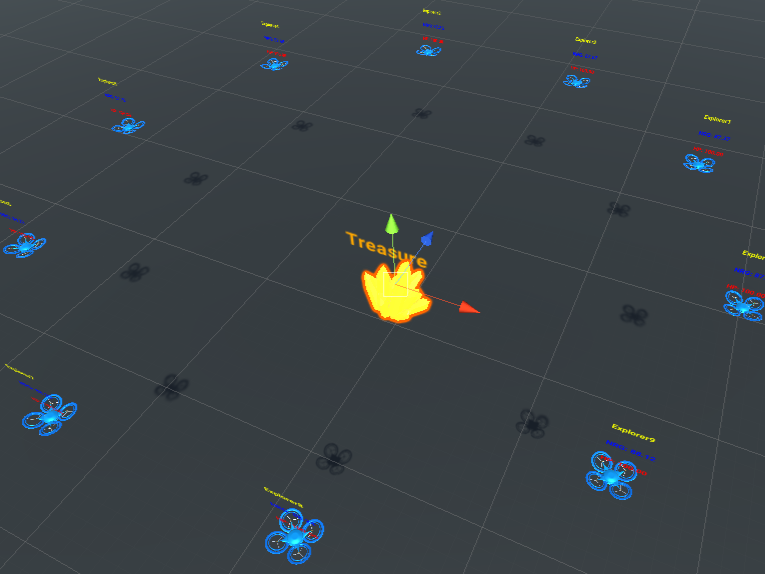}\vspace{0pt}
\caption*{\small{Final State}}
\label{fig: coveraging_formation}
\end{minipage}
\vspace{-2mm}
\caption{\small{Illustration of the game scenario where the Aliens (opponent agents - peer adversaries) block the paths to a target of the Explorers (protagonist agents). Further, the Explorers cooperate to keep different formations in specific situations depicted on the right side.}}
\vspace{-3mm}
\label{fig:overview}
\end{figure}

MAS research domains focus on solving path planning problems for avoiding static or dynamical obstacles in coverage and patrolling \cite{agmon2011multi,yehoshua2015adversarial} from the unintentional adversary perspective. For intentional (peer) adversaries, the "pursuit domain" \cite{chung2011search} primarily deals with how to guide one or a group of pursuers to catch one or a group of moving evaders \cite{makkapati2019optimal}. Similarly, the robot soccer domain \cite{nadarajah2013survey} deals with how one group of robots wins over another group of robots on a common game element. Foundations for normal-form team games and extensive-form adversarial team games are provided in \cite{von1997team} and \cite{celli2018computational}, respectively. {\color{black} From a game-theoretic control perspective, most of
the research has  focused on single stage games with fixed known agent utilities \cite{browning2005stp}, such as distributed control in communication \cite{zhang2014game} and Multi-Agent task allocation \cite{bakolas2021decentralized}}.
Nevertheless, it is more realistic and practical for MAS to organize more complex relationships and behaviors, achieving given tasks with higher success probability and lower costs in adversarial environments.

\textbf{Contributions:} 
The paper contributes to the field of Multi-Agent\footnote{In this paper, we use the terms agent and robot interchangeably.} systems through the following ways.
\begin{itemize}
\item Firstly, this paper proposes a new hierarchical network model called \textit{Game-theoretic Utility Tree (GUT)} to achieve cooperative decision-making of team-level Multi-Agent strategies under the presence of adversarial agents. 

\item Secondly, we present a game of Explorers vs. Aliens, referred to as \textit{"Explore Domain"} (see Fig.~\ref{fig:overview}), to analyze the MAS performance by balancing the success probability of achieving tasks and system costs by organizing involved agents' relationships and suitable groups' strategies in adversarial environments. 

\item Thirdly, we demonstrate the effectiveness of GUT against the standard decision-making algorithms such as QMIX \cite{pmlr-v80-rashid18a} and greedy approaches in extensive simulations of the \textit{Explore Domain}. The results indicate that \textit{GUT} could organize more complex relationships among MAS, helping the group achieve tasks with low costs and high winning probability. 

\item Finally, to verify the effectiveness of the GUT in the real world (and robotics) applications, we demonstrated the \textit{Explore Domain} in the Robotarium simulator-hardware multi-robot platform \cite{pickem2017robotarium}.
\end{itemize}

\section{Background and Preliminaries}
\label{sec:background}
This section provides a brief background to the Game theory principles used in this paper. Then, we present the agent needs hierarchy based on which the payoff utilities in our game-theoretic approach are designed. Further, we discuss the \textit{"Explore Domain"} problem.

\subsection{Game Theory Basics}
Game Theory is the science of strategy, which provides a theoretical framework to conceive social situations among competing players and produce optimal decision-making of independent and competing actors in a strategic setting \cite{myerson2013game}.

In a non-cooperative game, players compete individually and try to raise their profits alone. Especially in the zero-sum games, their total value is constant and will not decrease or increase, which means that one player’s profit is associated with another loss. In contrast, if different players form several coalitions trying to take advantage of their coalition, that game will be cooperative \cite{sohrabi2020survey}. \footnote{In this paper, we focus on non-cooperative games, where the ally and enemy agent teams do not cooperate. But, the ally agents will cooperate within the team to decide on a common team strategy by sharing their perception data.}

If a player chooses to take one action with probability 100\%, then the player is playing a pure strategy.
For the game's solutions, if players adopt a \textit{Pure Strategy}, it will provide maximum profit or the best outcome. Therefore, it is regarded as the best strategy for every player of the game. On the other hand, in a \textit{Mixed Strategy}, players execute different strategies with the possible outcome through a probability distribution over several actions. In the game theory, the \textit{Normal Form} describes a game through a matrix, where each player has a set of (mixed) strategies. They select a strategy and play their selections simultaneously. Furthermore, the selection of strategies results in payoff or utility for each player, and its goal in a game is to maximize utility. 

Furthermore, \textit{Nash Existence Theorem} is a theoretical framework, which guarantees the existence of a set of mixed strategies for a finite, non-cooperative game of two or more players in which no player can improve its payoff by unilaterally changing strategy. It guarantees that every game has at least one Nash equilibrium \cite{jiang2009tutorial}, which means that every finite game has a \textit{Pure Strategy Nash Equilibrium} or a \textit{Mixed Strategy Nash Equilibrium}.
Moreover, in any normal-form game with constant number of strategies per player, an $\epsilon$-\textit{approximate Nash Equilibrium} can be computed in time $\mathop{O}(n^{log~n/\epsilon^2})$, where $n$ is the description size of the game \cite{daskalakis2009complexity}.

\subsection{Agent Needs Hierarchy}
\label{sec:agent_needs}
In \textit{Agent Needs Hierarchy} \cite{yang2020hierarchical}, the abstract needs of an agent for a given task are prioritized and distributed into multiple levels, each of them preconditioned on their lower levels. At each level, it expresses the needs as an expectation over the corresponding factors/features' distribution to the specific group \cite{yang2020needs}. 

Specifically, it defines five different levels of agent needs similar to Maslow's human needs pyramid \cite{yang2021can}. The lowest (first) level is the safety features of the agent (e.g., features such as collision detection, fault detection, etc., that assure safety to the agent, human, and other friendly agents in the environment). The safety needs (Eq.~\eqref{safety_need}) can be calculated through its safety feature's value and corresponding safety feature's probability based on the current state of the agent. After satisfying safety needs, the agent considers its basic needs (Eq.~\eqref{basic_need}), which includes features such as energy levels, data communication levels that help maintain the basic operations of that agent. 
Only after fitting the safety and basic needs, an agent can consider its capability needs (Eq.~\eqref{capability_need}), which are composed of features such as its health level, computing (e.g., storage, performance), physical functionalities (e.g., resources, manipulation), etc. 

At the next higher level, the agent can identify its teaming needs (Eq.~\eqref{teaming_need}) that accounts the contributions of this agent to its team through several factors (e.g., heterogeneity, trust, actions) that the team needs so that they can form a reliable and robust team to successfully perform a given mission. 

Ultimately, at the highest level, the agent learns some skills/features to improve its capabilities and performance in achieving a given task, such as Reinforcement Learning. 
The policy features (Q table or reward function) are accounted into its learning needs expectation (Eq.~\eqref{learning_need}).
The expectation of agent needs at each level are given below:
\vspace{-2mm}
\begin{equation}
\begin{split}
    Safety~~Needs: N_{s_{j}} = \sum_{i=1}^{s_{j}} S_{i} \cdot \mathbb{P}(S_{i}|X_j, T); \label{safety_need}
\end{split}
\end{equation}
\vspace{-2mm}
\begin{equation}
\begin{split}
    Basic~~Needs: N_{b_j} = \sum_{i=1}^{b_{j}} B_{i} \cdot \mathbb{P}(B_{i}|X_j, T, N_{s_{j}}); \label{basic_need}
\end{split}
\end{equation}
\vspace{-2mm}
\begin{equation}
\begin{split}
    Capability~Needs: N_{c_j} = \sum_{i=1}^{c_{j}} C_i \cdot \mathbb{P}(C_i|X_j, T, N_{b_j}); \label{capability_need}
\end{split}
\end{equation}
\vspace{-2mm}
\begin{equation}
\begin{split}
    Teaming~~Needs: N_{t_j} = \sum_{i=1}^{t_j} T_i \cdot \mathbb{P(}T_i | X_j, T, N_{c_j}); \label{teaming_need}
\end{split}
\end{equation}
\vspace{-2mm}
\begin{equation}
\begin{split}
    Learning~~Needs: N_{l_j} = \sum_{i=1}^{l_j} L_i \cdot \mathbb{P(}L_i | X_j, T, N_{t_j}); \label{learning_need}
\end{split}
\end{equation}
{Here,} $X_j=\{P_j,C_j\}$ $\in$ $\Psi$ is the combined state of the agent $j$ with $P_j$ being the perceived information by that agent and $C_j$ representing the communicated data from other agents. $T$ is the assigned task (goal or objective). $S_i$, $B_i$, $C_i$, $T_i$, and $L_i$ are the utility values of corresponding feature/factor $i$ in the corresponding levels - Safety, Basic, Capability, Teaming, and Learning, respectively. $s_j$, $b_j$, $c_j$, $t_j$, and $l_j$ are the sizes of agent $j$'s feature space on the respective levels of needs. 

The collective need of an agent $j$ is expressed as the union of needs at all the levels in the needs hierarchy as in Eq.~\eqref{eqn:need-union} \footnote{Each category needs level is combined with various similar needs (expected values) presenting as a set, consisting of individual hierarchical and compound needs matrix $N_j$.}.
\begin{equation}
    N_j = N_{s_j} \cup N_{b_j} \cup N_{c_j} \cup N_{t_j} \cup N_{l_j} 
    \label{eqn:need-union}
\end{equation}

More specifically, the set of agent needs in a Multi-Agent System can be regarded as a kind of motivation or requirements for cooperation between agents to achieve a specific group-level task.

\subsection{Adversarial Agent Definition}
A friendly (ally) agent can contribute to the team, decreasing the individual needs of the team members, while an adversary can harm the team, increasing the overall needs of every team member. Based on this concept, we define an agent $R_1$ as adversary or friendly with respect to an agent $R_2$ as follows. For a certain state $\psi$ $\in$ $\Psi$, the agent $R_1$ is fulfilling a task $T$. Supposing the current needs of $R_1$ is $N_{{R_1}}(\psi, T)$. Considering another agent $R_2$ entering the $R_1$'s observation space, the needs of $R_1$ can be represented as $N_{{R_1}}(\psi \cup R_2, T)$ under the presence of the agent $R_2$. The following equations define the relationship between $R_1$ and $R_2$:
\begin{eqnarray}
    N_{{R_1}}(\psi \cup R_2, T) - N_{{R_1}}(\psi, T)  > 0; & (Adversary)     \label{eqn:adversary}
\\
    N_{{R_1}}(\psi \cup R_2, T) - N_{{R_1}}(\psi, T)  < 0; & (Friendly)
    \label{eqn:friendly} \\
        N_{{R_1}}(\psi \cup R_2, T) - N_{{R_1}}(\psi, T)  = 0. & (Neutral)
    \label{eqn:neutral}
\end{eqnarray}
\newtheorem{myDef}{Definition}
\begin{myDef}[Adversary]
If the needs of $R_1$ increase when $R_2$ is present, then $R_1$ regards $R_2$ as an {Adversary} (Eq. \eqref{eqn:adversary}). \footnote{Note, an obstacle is still an (unintentional) adversary as per this definition, since obstacles will increase the needs of an agent in terms of using more energy to avoid collision risk.}
\label{def:adversary}
\end{myDef}
\begin{myDef}[Friendly]
If the needs of $R_1$ decrease when $R_2$ is present, then $R_1$ sees $R_2$ as a friendly agent (Eq. \eqref{eqn:friendly}). 
\end{myDef}
\begin{myDef}[Neutral]
If the needs of $R_1$ do not change because of $R_2$'s presence, then $R_2$ is neutral to $R_1$ (Eq. \eqref{eqn:neutral}).  
\end{myDef}

\subsection{Explore Domain Problem}

{\color{black}In this paper, to simplify theoretical analysis and numerical calculations, we consider an exemplar problem domain called \textit{Explore domain}, which is described below.}
In \textit{Explore Domain}, $\alpha$ number of agents (called \textbf{Explorers} hereafter) are performing a task $T$, which is to explore an environment and collect rewards by reaching treasure locations. Supposing there are $\beta$ number of (intentional) adversaries (called \textbf{Aliens} hereafter). 
Explorers can choose a strategy $s_e$ from their strategy space $S_{e}$ and aliens can choose a strategy $s_a$ from their strategy space $S_{a}$. We assume both these strategy spaces are known to the Explorer agents. We also assume that the Aliens do not have a cooperation strategy, and each Alien acts independently on its own.

Let $C_i$ represent the system costs of explorer $i$ to perform this task and $W$ denote the success probability (win rate) of the explorer team under the presence of alien(s). We model this as a bi-objective optimization problem (Eq.~\eqref{exploring_game_problem}) of finding an optimal collective team strategy for the explorers $s_e^* \in S_e$ under the premise of maximizing success $W$ (against aliens) while minimizing costs $C$ using the collective needs of the explorers $N_{e}  = \sum_{i=1}^{\alpha} N_i$. 
\begin{equation}
   s_e^* = \underset{s_e \in S_e}\argmax [W(s_e | T, S_a, N_e) - \sum_{i=1}^{\alpha} C_i(s_{i} | T, s_e, N_{i})]. 
\label{exploring_game_problem}
\end{equation}

Without an adversary, the problem shrinks to a typical exploration problem (optimizing $C$ alone) \cite{kim2018multirobot}, and without a task $T$, the problem shrinks to a typical non-cooperative zero-sum game problem (optimizing $W$ alone) \cite{myerson2013game}. 

{\color{black}The proposed problem can be applied to other Real-Time Strategy (RTS) games, such as air combat and StarCarft, and cooperative Multi-Agent/robot mission, like urban search and rescue (USAR) \cite{yang2020needs}, pursuit-evasion game \cite{chung2011search} and robot soccer \cite{nadarajah2013survey}. These domains involve both ally agents and opponent agents (intentional adversaries), and the nature of the strategies the agents or the team can take can allow dissolving the high-level group strategies into primitive or atomic actions.} 
For instance, in Star Craft \cite{vinyals2019grandmaster}, the strategies a player can make can be composed of the following primitives: What to do? (e.g., Build, Move, Attack, etc.); Who to perform? (which player to execute this action or which opponent to attach, etc.);  Where? (physical location or point of interest); When? (immediately or delayed or how long); etc. 

\section{Proposed GUT Approach}
\label{sec:approach}
In the following description, we ground the description of the decision-making process with GUT using the "Explore domain" as an example.
Fig.~\ref{fig:game_decision_trees} outlines the structure of the \textit{Game-theoretic Utility Tree (GUT)} and its computation units distributed in each level. 
First, the \textit{game-theoretic module} (Fig.~\ref{fig:game_decision_trees} (a)) calculates the nash equilibrium based on the utility values $(u_{11}, ... , u_{nm})$ of corresponding situations, $(p_1, ... , p_{nm})$ presenting the probability of each situation.
Then, through the \textit{conditional probability (CP)} module (Fig.~\ref{fig:game_decision_trees} (b)), the CP of each situation can be described as $(p_{i1}, ... , p_{inm})$, where $p_{inm} = (p_{nm} | p_{i-1}),~i,n,m \in Z^+$. \footnote{{Here,} $p_{i-1}$ and $S_i$ present the probability of previous situation and current strategy in the game-theoretic payoff table; $S_a$, $S_b$ and $n$, $m$ represent their strategy space and size on both sides, respectively.}

\begin{figure}[tbp]
\centering
\includegraphics[width=0.98\columnwidth]{./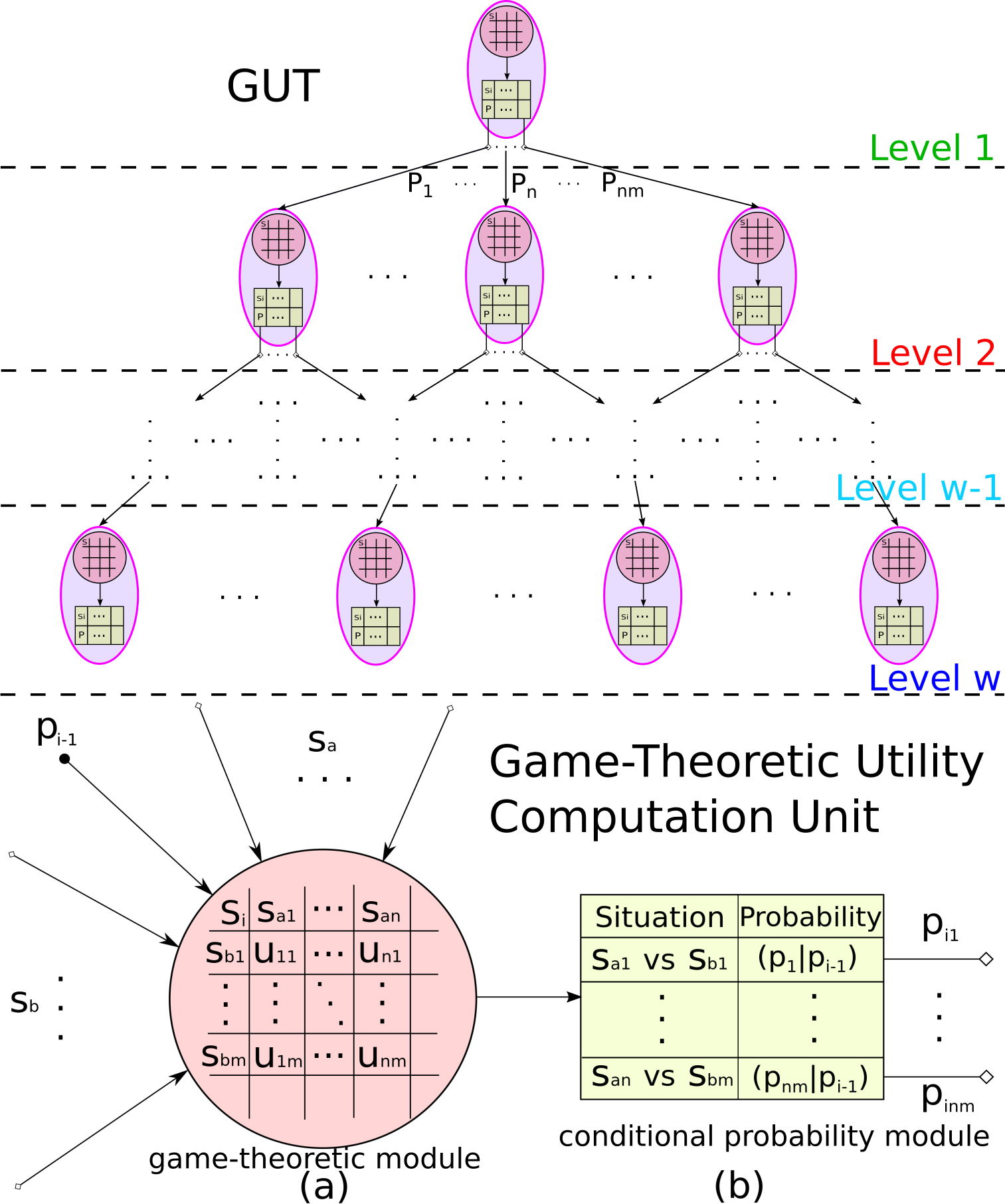}
\caption{Structure of the Game-theoretic Utility Tree (GUT).}
\label{fig:game_decision_trees}
\vspace{-4mm}
\end{figure}

To tackle intentional adversaries, agents first decompose the specific group strategy into several independent sub-strategies based on the same category of individual low-level primitives or atomic operations. Then, through calculating various Nash equilibrium based on different situation utility values in each level's \textit{Game-theoretic Utility Computation Units}, agents can get optimal or suboptimal strategy sets tackling the current status according to \textit{Nash Existence Theorem}. 

From the task execution perspective, the \textit{GUT} can be regarded as a \textit{Task-Oriented Decision Tree}. It decomposes a big game into several sub-games with conditional dependence in each level, which organizes agents' strategies and presents more complex behaviors adapting to adversarial environments. 

We formulate the decision-making problem as a zero-sum game between two groups of agents when both groups are adversaries to each other, as per the Definition~\ref{def:adversary}. 
Let A and B represent the group of ally agents and an adversary agent, respectively. The simultaneous normal-form game representing the non-cooperative game between ally and adversaries is a game structure G=$\langle\{s_A,s_B\},\{N_{A},N_{B}\}\rangle$. 
The theoretical guarantees to prove the effectiveness of the GUT solution for the game G is provided in Theorem~\ref{gut_dec} below.

\newtheorem{proTheo}{Theorem}
\begin{proTheo}[GUT Decision]
\label{gut_dec}
Supposing the GUT for the ally group has w levels of action decomposition together, making a group strategy $s_A = \{s_A^1, s_A^2, ... , s_{A}^w\}$, where $s_A^i$ represents the $i^{th}$ level sub-strategy in the ally group strategy space. Then, we can show that agent group A using GUT against adversary B will have at least one dominant strategy series $s_A^* = (s_A^1*, s_A^2*, ... , s_A^w*)$, which will enable agents A collectively execute an optimal strategy $s_A^*$ against adversary B, promising a solution to the problem in Eq.~\eqref{exploring_game_problem}.
\end{proTheo}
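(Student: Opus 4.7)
\medskip
\noindent\textbf{Proof Proposal.} The plan is to prove the theorem by modelling the GUT as a finite sequence of coupled normal-form sub-games indexed by the tree levels $i=1,\dots,w$, applying the Nash Existence Theorem level by level, and then composing the resulting per-level equilibria into a single strategy series via backward induction over the conditional-probability structure of Fig.~\ref{fig:game_decision_trees}. First I would formalise each level as a sub-game $G^i=\langle\{s_A^i,s_B^i\},\{u_A^i,u_B^i\}\rangle$, where the payoffs $u_A^i,u_B^i$ are the utilities computed from the agent-needs expressions in Eqs.~\eqref{safety_need}--\eqref{learning_need} conditioned on the outcome of level $i-1$ (with the base case being the root node). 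Because each $G^i$ is finite, non-cooperative, and two-sided, the Nash Existence Theorem cited in Section~\ref{sec:background} immediately gives at least one Nash equilibrium $s_A^{i*}$ (pure or mixed) for every level.

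Next I would assemble the global strategy series $s_A^* = (s_A^{1*},s_A^{2*},\dots,s_A^{w*})$ by backward induction: starting from the leaves at level $w$, the equilibrium $s_A^{w*}$ is dominant in $G^w$ by definition, so propagating its expected utility upward defines the payoffs used in $G^{w-1}$, and recursing up to the root yields a series in which no level admits a unilateral deviation that improves A's expected payoff. This is essentially a subgame-perfect argument adapted to the CP links $p_{inm}=(p_{nm}\mid p_{i-1})$: because the conditional probability module in Fig.~\ref{fig:game_decision_trees}(b) factorises the joint distribution over situations into a product of level-wise conditionals, the expected utility of the composite strategy $s_A^*$ equals the sum (or, up to the normalisation used in the GUT, the weighted aggregate) of the expected payoffs of the per-level equilibria, so dominance at every level implies dominance of the composition.

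Finally I would tie the resulting $s_A^*$ back to the bi-objective problem in Eq.~\eqref{exploring_game_problem}. The per-level payoffs are exactly the terms that appear in $W(s_e\mid T,S_a,N_e)-\sum_i C_i(s_i\mid T,s_e,N_i)$ once the needs utilities from Section~\ref{sec:agent_needs} are substituted in; since $s_A^*$ maximises the expectation of this quantity level by level and the composition preserves the maximum via the CP factorisation, $s_A^*$ attains the $\argmax$ in Eq.~\eqref{exploring_game_problem}, which is precisely the claimed optimal strategy. A brief remark would handle the case in which a level only admits a mixed-strategy equilibrium, observing that the Nash Existence Theorem still guarantees existence and that an $\epsilon$-approximate equilibrium in the sense of the complexity bound quoted in Section~\ref{sec:background} suffices for implementation.

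\medskip
\noindent\textbf{Main obstacle.} The hard part will be the composition step, namely showing rigorously that a concatenation of level-wise Nash equilibria is itself an equilibrium (indeed a dominant series) of the aggregated game that the flat formulation in Eq.~\eqref{exploring_game_problem} implicitly defines. This is not automatic: in general, stitching together equilibria of sub-games can fail to be an equilibrium of the whole once cross-level deviations are allowed. I would address this by leaning on the explicit conditional-probability structure of the GUT (which rules out payoff couplings across non-adjacent levels) and on the absence of cooperation on B's side, so that B's best response also factorises level-wise; this makes the argument reduce to the standard subgame-perfection-style induction sketched above.
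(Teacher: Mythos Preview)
Your proposal shares the paper's skeleton---formalise each level as a finite two-player game, invoke Nash existence per level, then use the conditional-probability structure of Fig.~\ref{fig:game_decision_trees} to glue the per-level equilibria together---but diverges in how the composition step is carried out. The paper does \emph{not} use backward induction or a subgame-perfection argument; instead it treats the GUT as a probabilistic graphical model, writes the joint distribution over all $\mathcal{N}$ sub-game outcomes via the chain rule $\mathbb{P}(X)=\mathbb{P}(X_1)\,\mathbb{P}(X_2\mid X_1)\cdots\mathbb{P}(X_{\mathcal N}\mid X_1,\dots,X_{\mathcal N-1})$, and then simply observes that since Nash existence gives $\mathbb{P}_i(X_i)\neq 0$ at every node, the product $\mathbb{P}(X)\neq 0$, concluding that a dominant strategy series always exists. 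Your backward-induction route is the more conventional game-theoretic device and actually engages the issue you correctly flag as the main obstacle (whether level-wise equilibria compose to a global one under cross-level deviations), whereas the paper's chain-rule argument establishes only non-vanishing of the joint probability and leaves the global-optimality claim implicit. What the paper's approach buys is brevity and a direct link to the PGM picture; what yours buys is a cleaner justification that the composed series $s_A^*$ is dominant rather than merely existent, and a more explicit connection back to the $\argmax$ in Eq.~\eqref{exploring_game_problem}.
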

\begin{proof}
\label{proof:gut_decision}
For \textit{w}-level GUT, supposing game G$_i$ is the level \textit{i} of GUT describing the corresponding zero-sum game solved using the payoff (utility) values based on the features in the needs hierarchy corresponding to the level of the GUT. To prove the theorem, we will first prove that at each level $i$, the game-theoretic computation units produce optimal strategies. Then, we will show that by composing strategies at each level, the GUT produces the optimal strategy for the team.

Let the size of (team) strategy space of agents in group A within the level $i$ of GUT is \textit{l$_i$} and that of agent B is \textit{m$_i$}. For GUT-based decision, the \textit{zero-sum} game at each level is 
\begin{equation}
    G_i=\langle\{s_A^i,s_B^i\},\{N_{A_i},N_{B_i}\}\rangle, i \in w .  \label{level_game}
\end{equation}
Here, $N_{A_i}$ is the needs/utility values of group A at a level $i$ in the GUT.
Based on the needs of agent (Eq.~\eqref{eqn:need-union}), the expectation on the utilities (payoff values) at GUT level $i$ of group A is
\begin{equation}
    \mathbb{E} (U^i) = (u_{gk}^i)_{l_i \times m_i} =  \sum_{j=1}^{|A|} N^j_{A_i} (\psi \cup B, T | s_A^i=g, s_B^i =k).
    \label{need_expectation}
\end{equation}
where, $\psi$ is the combined state perceived by the Multi-Agent group A (with group size $|A|$) executed through the sequence of strategies $s_A$, $1\leq g \leq l_i$ and $1 \leq k \leq m_i$. 
Here, the payoff values depend on the agents' collective needs of the group.

According to \textit{Nash Existence Theorem}, it guarantees the existence of a set of mixed strategies for finite, non-cooperative games of two or more players in which no player can improve his payoff by unilaterally changing strategy. So every finite game has a \textit{Pure Strategy Nash Equilibrium} or a \textit{Mixed Strategy Nash Equilibrium}.
The process can be formalized as two steps:

\textit{a. Compute Pure Strategy Nash Equilibrium}

We can present the agents' utility matrix as Eq. \eqref{A_level1_utility}:\\
\begin{equation}
\left[
\begin{matrix} \label{A_level1_utility}
u_{11} & u_{12} & \cdots & u_{1m_i} \\ 
u_{21} & u_{22} & \cdots & u_{2m_i} \\ 
\vdots & \vdots & \ddots & \vdots \\ 
u_{l_i1} & u_{l_i2} & \cdots & u_{l_im_i} \\ 
\end{matrix} 
\right]
\end{equation}
The row and column correspond to the utilities of agent $A$ and $B$, respectively. We can compute the maximum and minimum values of the two lists separately by calculating each row's minimum value and each column's maximum value.
\begin{equation}
\max \limits_{1\leq k\leq m_i} \min  \limits_{1\leq g\leq l_i} u_{gk} = 
\min \limits_{1\leq g\leq l_i} \max  \limits_{1\leq k\leq m_i} u_{gk} \label{p_game_result}
\end{equation}
If the two value satisfy the Eq. \eqref{p_game_result}, we can get the game G$_i$ Pure Strategy Nash Equilibrium (Eq. \eqref{psne}), and corresponding game value in Eq. \eqref{level1_game_result}.
\begin{eqnarray}
&& PSNE=(A_{g^*},B_{k^*}); \label{psne} \\
&& V_{G_i}=u_{g^*k^*} \label{level1_game_result}
\end{eqnarray}

\textit{b. Compute Mixed Strategy Nash Equilibrium}

The probability of choosing the strategy (at level $i$) agent $A$ can be obtained as $X_A=(x_1,x_2,\ldots,x_{l_i})$, satisfying $\sum_{g=1}^{l_i} x_g=1, x_g \geq 0,~g=1,2,\ldots,l_i$.
Similarly, agent $B$ strategies' probability is $Y_B=(y_1,y_2,\ldots,y_{m_i})$, satisfying $\sum_{k=1}^{m_i} y_k=1, y_k \geq 0,~k=1,2,\ldots,m_i$.
Using these probabilities, we can define \textit{(X, Y)} as \textit{Mixed Situation} in certain status. Then, we can deduce the expected utility of agent $A$ and agent $B$ as Eq. \eqref{A_expected_utility} and \eqref{B_expected_utility}, respectively.
\begin{eqnarray}
&& {\mathop{\mathbb{E}}}_A(X,Y)=\sum_{g=1}^{l_i} \sum_{k=1}^{m_i} u_{gk} x_g y_k = \mathop{\mathbb{E}}(X,Y); \label{A_expected_utility}\\
&& {\mathop{\mathbb{E}}}_B(X,Y)=-\mathop{\mathbb{E}}(X,Y) \label{B_expected_utility}
\end{eqnarray}
In the Game G$_i\langle\{S_e,S_a\},N_{{A_i}}\rangle$, if we get all the \textit{Mixed Tactics} of agent $A$ and $B$ as Eq. \eqref{A_tactics} and \eqref{B_tactics}, we can deduce the G$_i$'s \textit{Mixed Expansion} as Eq. \eqref{mixed_expansion}. Furthermore, if a tactic $(X^*, Y^*)$ satisfies Eq. \eqref{A_mne_condition} and \eqref{B_mne_condition}, we define the tactic as the optimal strategy (Eq. \eqref{m_game_result}) in the current state $\psi$.
\begin{eqnarray}
&& S^*_A=\{X_A\}; \label{A_tactics} \\
&& S^*_B=\{Y_B\}; \label{B_tactics} \\
&& G_i^*=\{S^*_A,S^*_B; \mathbb{E}\}; \label{mixed_expansion} \\
&& \mathop{\mathbb{E}}(X^*,Y)\geq V_{s_A}, \forall~Y \in S^*_B; \label{A_mne_condition} \\ 
&& \mathop{\mathbb{E}}(X,Y^*)\leq V_{s_B}, \forall~X \in S^*_A; \label{B_mne_condition} \\ 
&& V_{s_A}=V_{G_i}=V_{s_B} \label{m_game_result}
\end{eqnarray}
Therefore, we can prove that at each sublevel in the GUT, we will obtain optimal strategy as per the Nash existence theorem.

GUT decomposes a complex big game into conditionally dependent small games and presents them as a tree structure. We use the Probabilistic Graphical Models \cite{koller2009probabilistic} expressing the GUT computation process. 
Supposing the total number of nodes (sub-games) in the GUT is $\mathcal{N}$, according to the {\it Chain Rule}, the \textit{joint probability} of the GUT in group $A$ can be described as Eq. \eqref{joint_p}.
\begin{equation}
\begin{split}
    \mathbb{P}(X) & = \mathbb{P}(X_1, X_2, ..., X_{\mathcal{N}}) \\
         & = \mathbb{P}(X_1) \mathbb{P}(X_2|X_1) ... \mathbb{P}(X_\mathcal{N}|X_1, X_2, ... , X_{\mathcal{N}-1}) .
         \label{joint_p}
\end{split}
\end{equation}
Since \textit{Nash Existence Theorem} guarantees that every game has at least one Nash equilibrium \cite{jiang2009tutorial}, we get Eq. \eqref{exist_p}.
\begin{equation}
    \mathbb{P}_i(X_i) \neq 0 ~~ \Longrightarrow ~ \mathbb{P}(X) \neq 0, i \in \mathcal{N}
    \label{exist_p}
\end{equation}

This shows that we can always find an optimal strategy against the adversary in the current situation through the GUT by calculating the Nash Equilibrium in the corresponding sub-games distributed at each level.
\end{proof}

\begin{proTheo}[GUT Complexity]
\label{gut_efficiency}
Assuming the agents' strategies are decomposable into sub-strategies for a specific application domain, then the GUT computes the optimum strategy of an agent (or a group's strategy) more efficiently than applying pure game-theoretic approach applied on the whole strategy space without strategy decomposition (greedy approaches).
\end{proTheo}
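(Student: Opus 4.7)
The plan is to compare the time complexity of computing an approximate Nash equilibrium on the undecomposed joint game against the total complexity of solving the sub-games at each level of the GUT, invoking the bound of $O(n^{\log n/\epsilon^2})$ cited from Daskalakis et al.\ in the preliminaries. First I would set up the two quantities precisely: let $l_i$ and $m_i$ denote the sub-strategy space sizes of group $A$ and the adversary $B$ at GUT level $i$, so that the flattened joint strategy space of group $A$ has size $L=\prod_{i=1}^{w}l_i$ and that of $B$ has size $M=\prod_{i=1}^{w}m_i$. The description size of the monolithic normal-form game is then $\Theta(LM)$, and applying the ${\epsilon}$-approximation result gives a running time $T_{\mathrm{flat}}=O\!\left((LM)^{\log(LM)/\epsilon^{2}}\right)$.

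Next I would account for the GUT cost. Since each level $i$ defines an independent sub-game $G_i$ of description size $\Theta(l_i m_i)$, and since Theorem~\ref{gut_dec} already establishes that solving these sub-games in sequence yields a dominant strategy series, the total work is
\begin{equation}
T_{\mathrm{GUT}}=\sum_{i=1}^{w}O\!\left((l_i m_i)^{\log(l_i m_i)/\epsilon^{2}}\right).
\label{eq:gut_cost_plan}
\end{equation}
The remainder of the argument is then a comparison of $T_{\mathrm{GUT}}$ with $T_{\mathrm{flat}}$: using $\log(LM)=\sum_{i}\log(l_i m_i)$, a single factor in $T_{\mathrm{flat}}$ already contains the exponent $\sum_i \log(l_i m_i)/\epsilon^{2}$, which dominates any individual exponent $\log(l_i m_i)/\epsilon^{2}$ appearing in $T_{\mathrm{GUT}}$ whenever $w\ge 2$ and each $l_i m_i\ge 2$. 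I would make this precise by bounding $T_{\mathrm{GUT}}\le w\cdot (L^*M^*)^{\log(L^*M^*)/\epsilon^{2}}$ with $L^*M^*=\max_i l_i m_i$, and then observing that $(L^*M^*)^{\log(L^*M^*)}\ll (LM)^{\log(LM)}$ as soon as the decomposition is non-trivial, which yields $T_{\mathrm{GUT}}=o(T_{\mathrm{flat}})$.

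The hard part will be stating the hypothesis under which ``more efficient'' actually holds, because the statement of the theorem is only meaningful when the agent strategies are genuinely decomposable into independent sub-strategies at each level; otherwise the GUT sub-games would have to carry contextual information that re-inflates their description size. I would therefore begin by making the decomposability assumption explicit (the sub-strategy sets at level $i$ do not depend on the realised choice at levels $<i$ beyond the conditional probabilities already captured in Fig.~\ref{fig:game_decision_trees}(b)), and then check that the conditional-probability propagation inside the tree contributes only an additive $O(\mathcal{N})$ overhead, negligible compared to the super-polynomial Nash term. With these pieces in place, the strict inequality $T_{\mathrm{GUT}} < T_{\mathrm{flat}}$ follows, and optimality of the resulting strategy is inherited from Theorem~\ref{gut_dec}, completing the proof.
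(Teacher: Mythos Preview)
Your approach is correct and reaches the same conclusion, but it differs from the paper's argument in a meaningful way. The paper frames the comparison through the master theorem: it assumes all levels have the same sub-game size, writes the GUT cost as a divide-and-conquer recurrence $T(\xi)=aT(\xi/b)+G(\xi)$, and then asserts the bounds $\xi^{\log_b a}\le G(\xi)\le \xi^{\log\xi/\epsilon^2}$ to conclude that the decomposed computation is cheaper than the monolithic $\epsilon$-Nash computation. Your argument instead computes the two costs directly---summing the per-level $\epsilon$-Nash bounds for $T_{\mathrm{GUT}}$ and taking the single bound on the product space for $T_{\mathrm{flat}}$---and compares them via $\log(LM)=\sum_i\log(l_im_i)$.

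What your route buys is generality and transparency: you do not need the uniform-size assumption across levels, and the inequality $T_{\mathrm{GUT}}\le w\cdot(L^*M^*)^{\log(L^*M^*)/\epsilon^2}\ll (LM)^{\log(LM)/\epsilon^2}$ is explicit once $w\ge2$ and each $l_im_i\ge2$. The paper's master-theorem framing is more compact but also looser---the recurrence as written mixes the role of $G(\xi)$ (combine cost versus full-game cost) and the stated bounds are asserted rather than derived from a specific master-theorem case. Your added remark about making the decomposability hypothesis explicit, and about the $O(\mathcal{N})$ overhead of the conditional-probability propagation, is a genuine tightening that the paper's proof leaves implicit.
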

\begin{proof}
Using \textit{the master theorem} \cite{cormen2009introduction} of calculating computational efficiency of an Algorithm with a tree-like structure, we will prove the GUT's efficiency. 
Suppose all games at each of the GUT's level has the same size of the strategies space, then the GUT can be described as the running time $T$ of an approach that recursively divides a game $G(\xi)$ of size $\xi$ into $a$ sub-games, each of size $\xi/b,~a,b \in Z^+$ (Eq. \eqref{gut_recur}).
\begin{equation}
    T(\xi) =  a T(\frac{\xi}{b}) + G(\xi),
\label{gut_recur}
\end{equation}
If $G(\xi)$ is the one-level game (without strategy decomposition), the complexity is $\mathop{O}(\xi^{log~\xi/\epsilon^2})$ \cite{daskalakis2009complexity}, for a game with $\epsilon$-approximate near-Nash equilibrium.
Then, the game complexity $G(\xi)$ has the following asymptotic bounds:
\begin{equation}
  \xi^{log_b a} \leq G(\xi) \leq \xi^{log~\xi/\epsilon^2},~\epsilon \in (0,1).
\label{gut_map_2}
\end{equation}
Therefore, it is clear that the GUT with action decomposition is more efficient than a 1-level game of the same size. 
\end{proof}

Furthermore, the scalability of the GUT depends on the efficiency of sharing information across group members, which relies on the specific communication graph between agents.

\begin{table}[h]
     \caption{\small{Level 1 (Attack/Defend) Tactics Payoff Matrix.}}
    \label{tab:first_level_matrix} 
    \begin{minipage}[c]{0.6\columnwidth}
    \begin{center}
    \begin{tabular}{|c|c|c|}
    \hline
    \diagbox{AT}{Utility}{ET} & Attack & Defend \\ \hline
    Attack & $W_{AA}$ & $W_{DA}$ \\ \hline
 Defend & $W_{AD}$ & $W_{DD}$ \\ \hline
 \end{tabular}
\end{center}
\end{minipage}
\begin{minipage}{0.35\columnwidth}
ET - Explorer Tactics

AT - Alien Tactics
\end{minipage}
\vspace{2mm}
     \caption{\small{Level 2 (Who to Attack/Defend) Payoff Matrix.}}
    \label{tab:second_level_matrix}
    \vspace{-2mm}
    \begin{center}
    \begin{tabular}{|c|c|c|c|}
    \hline
    \diagbox{AT}{Utility}{ET} & Nearest & Lowest Ability & Highest Ability \\ \hline
    Nearest & $E_{NN}$ & $E_{A_LN}$ & $E_{A_HN}$ \\ \hline
 Lowest Ability & $E_{NA_L}$ & $E_{A_LA_L}$ & $E_{A_HA_L}$ \\ 
 \hline
 Highest Ability & $E_{NA_H}$ & $E_{A_LA_H}$ & $E_{A_HA_H}$ \\ 
 \hline
 \end{tabular}
\end{center}
\caption{\small{Level 3 (How to Attack/Defend) Payoff Matrix.}}
    \label{tab:third_level_matrix}
    \vspace{-2mm}
\begin{center}
   \begin{tabular}{|c|c|c|c|}
    \hline
    \diagbox{AT}{Utility}{ET} & One Group & Two Group & Three Group \\ \hline
    Independent & $HP_{1I}$ & $HP_{2I}$ & $HP_{3I}$ \\ \hline
 Dependent & $HP_{1D}$ & $HP_{2D}$ & $HP_{3D}$ \\ \hline
 \end{tabular}
\end{center}
\vspace{-2mm}
\end{table}

\subsection{GUT Implementation in the Explore Domain}
\label{sec:domain}

In this game, if the explorers do not detect any threat (aliens), they will group in \textit{patrol} formation to explore the circumstance until they reach the treasure's location and \textit{circle} around it (see Fig. \ref{fig:overview}). In the whole process, explorers present a kind of global behaviors using \textit{collective rationality} and caring about their \textit{group interest}. In contrast, aliens are more powerful than explorers but show only \textit{self-interest} and do not cooperate within themselves.

For the gut implementation, we decompose the team strategy into three levels.
At the highest strategy level, the explorer agents decide "what" to do (attack or defend) under the presence of an adversary, using their \textit{teaming needs} as the utility function expressed through a win probability function  $W_{xx}$ for a specific Attach/Defend strategy combination. This helps make group-aware decisions to maximize the chance of collectively reaching the treasure as a team while minimizing the overall team costs. See Table~\ref{tab:first_level_matrix} for the payoff matrix at Level 1.
At the second strategy level (deciding "who" to attack or defend against), the explorers use their collective basic needs expressed as a function of their current energy level $E_{xx}$ in their payoff table. This helps the decision energy-aware (Table~\ref{tab:second_level_matrix}). 
At the lowest strategy level (deciding "how" to attack/defend against), the explorers use their collective safety needs expressed through their health status (\textit{HP} value) to calculate the payoff $HP_{xx}$ at this level (Table~\ref{tab:third_level_matrix}). This ensures the decision is safety-aware since safety is the highest priority of needs as per the needs hierarchy defined in Sec.~\ref{sec:agent_needs}. 
The design of the utility functions at each level is critical to determine whether an agent can calculate reasonable tactics. 

Furthermore, we consider the \textit{Winning Utility} following \textit{Bernoulli Distribution} to represent individual high-level expected utility (teaming \& cooperation needs) in the first level (Eq. \eqref{winning_possibility}). And we assume that the second level's utility is described as the relative \textit{Expected Energy Cost} Eq. \eqref{expected_energy_cost} following \textit{Normal Distribution}. At the lowest level, we use the relative \textit{expected HP cost} to describe the expected utility Eq. \eqref{expected_HP_cost} and assume the attacking times $p(\phi)$ follow \textit{Poisson Distribution}. \footnote{Here, $n$ and $m$ represent the number of Explorers and Aliens respectively; $t_{e}$ and $t_{a}$ represent corresponding average energy levels of both sides; $d$ represents the group average distance between two opponents; attacked {\it HP} cost per time is $h$; agent size is $\phi$; k and g represent the number of attacking Explorers and Aliens, respectively; a, b, and c are corresponding coefficients.}
\begin{equation}
   W(t_{e},t_{a},n,m) = a(\frac{t_{e}}{t_{a}})^{\frac{m}{n}}; \label{winning_possibility}
\end{equation}
\begin{equation}
\begin{split}
    E(n,m,d)= b_0 + b_1 \int_{-\infty}^{+\infty} (n - m) x \frac{1}{\sqrt{2\pi}} e^{-\frac{(x-d)^2}{2}} \mathrm{d}x 
    \label{expected_energy_cost}
\end{split}
\end{equation}
\begin{equation}
\begin{split}
    & H(k,g,\phi_e,\phi_m) = c_0 + c_1 (\sum_{i=1}^{+\infty} k h_{m}p(\phi_m) - \sum_{j=1}^{+\infty} gh_{e}p(\phi_e)); \label{expected_HP_cost}
\end{split}
\end{equation}

Specifically, the first level defines the agent's high-level strategies: \textit{Attack} and \textit{Defend}, which are represented as \textit{Triangle} and \textit{Regular~Polygon} formation shapes of the explorer team. Based on the first level decision, they need to decide the specific opponent attacking or defending in the second level. In the last level, explorers choose how many groups to form based on whether aliens follow other adjacent alien behaviors (dependent) or not (independent) \footnote{We assume that agents have three basic tactics: attacking or defending against the adversary, which is either the \textit{nearest} or has the \textit{lowest} or the \textit{highest} attacking ability.}. A video demonstrating the following experiments is available at \url{https://youtu.be/_zE4eh03Voo}.

\section{Numerical Experiments}
\label{sec:evaluation}
Considering cross-platform, scalability, and efficiency of the simulations, we chose the ``Unity'' game engine to simulate the \textit{Explorers and Aliens Game} and selected Gambit \cite{mckelvey2006gambit} toolkit for calculating each level's Nash Equilibrium in the GUT.

In the experiments, we suppose each explorer has the same initial energy and HP levels, and every moving step will cost $0.015\%$ energy. 
Every communication round and per time attacking will cost $0.006\%$ and $0.01\%$ energy level, respectively.
Attacked by the aliens will cost the explorer $0.15\%$ HP per time. 
Per design, the aliens are 3x more powerful and capable than the explorers in the attacks, with per time attacking energy and per time attacked HP costs for an alien are $0.03\%$ and $0.05\%$, respectively.
We simulated the attack/defend process through the agent proximity and the time spent in the zone attacking/defending. 
To simplify our experiments and focus on \textit{Explorers}' high-level interactions and decision-making, we assume that the agent's speed is constant in the simulations and do not consider optimizing the adversary agent's strategy and performance.

\begin{figure*}[t]
\centering
\begin{minipage}[b]{0.245\linewidth}
\includegraphics[width=1\textwidth]{./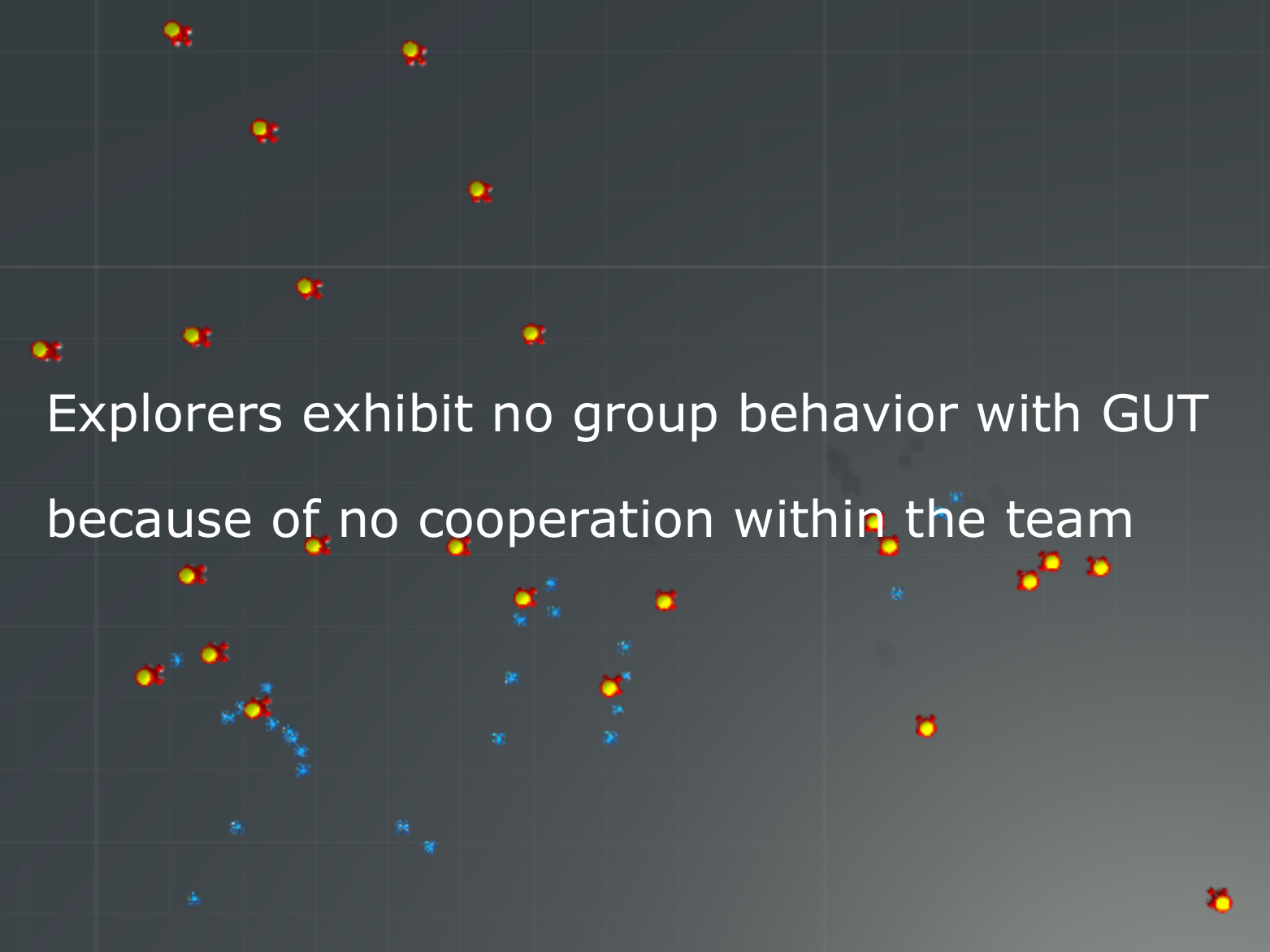}\vspace{0pt}
\caption{\small{GUT (NC)}}
\label{fig: qmix_gut}
\end{minipage}
\begin{minipage}[b]{0.245\linewidth}
\includegraphics[width=1\textwidth]{./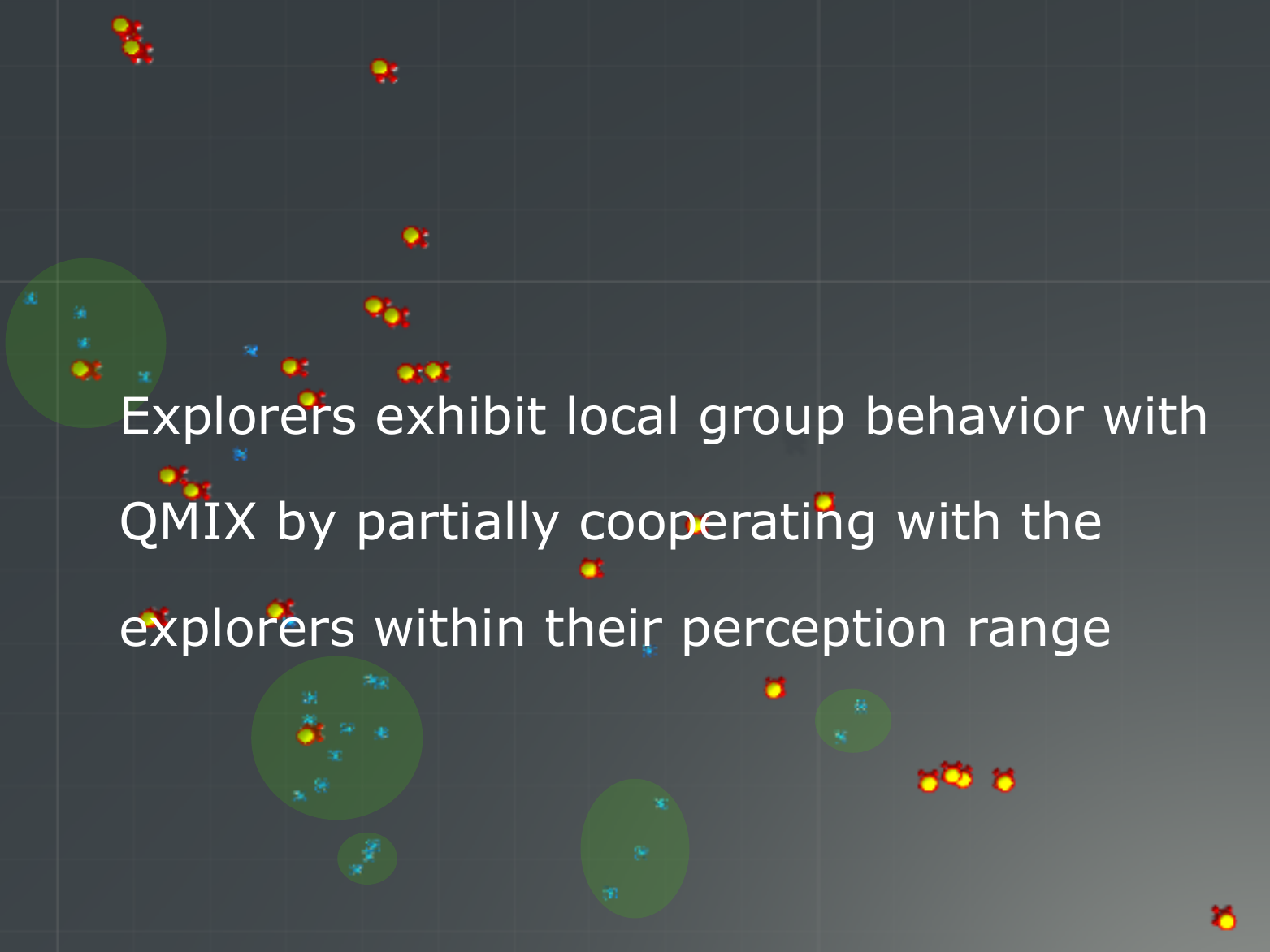}\vspace{0pt}
\caption{\small{Greedy/QMIX (PC)}}
\label{fig: qmix}
\end{minipage}
\begin{minipage}[b]{0.245\linewidth}
\includegraphics[width=1\textwidth]{./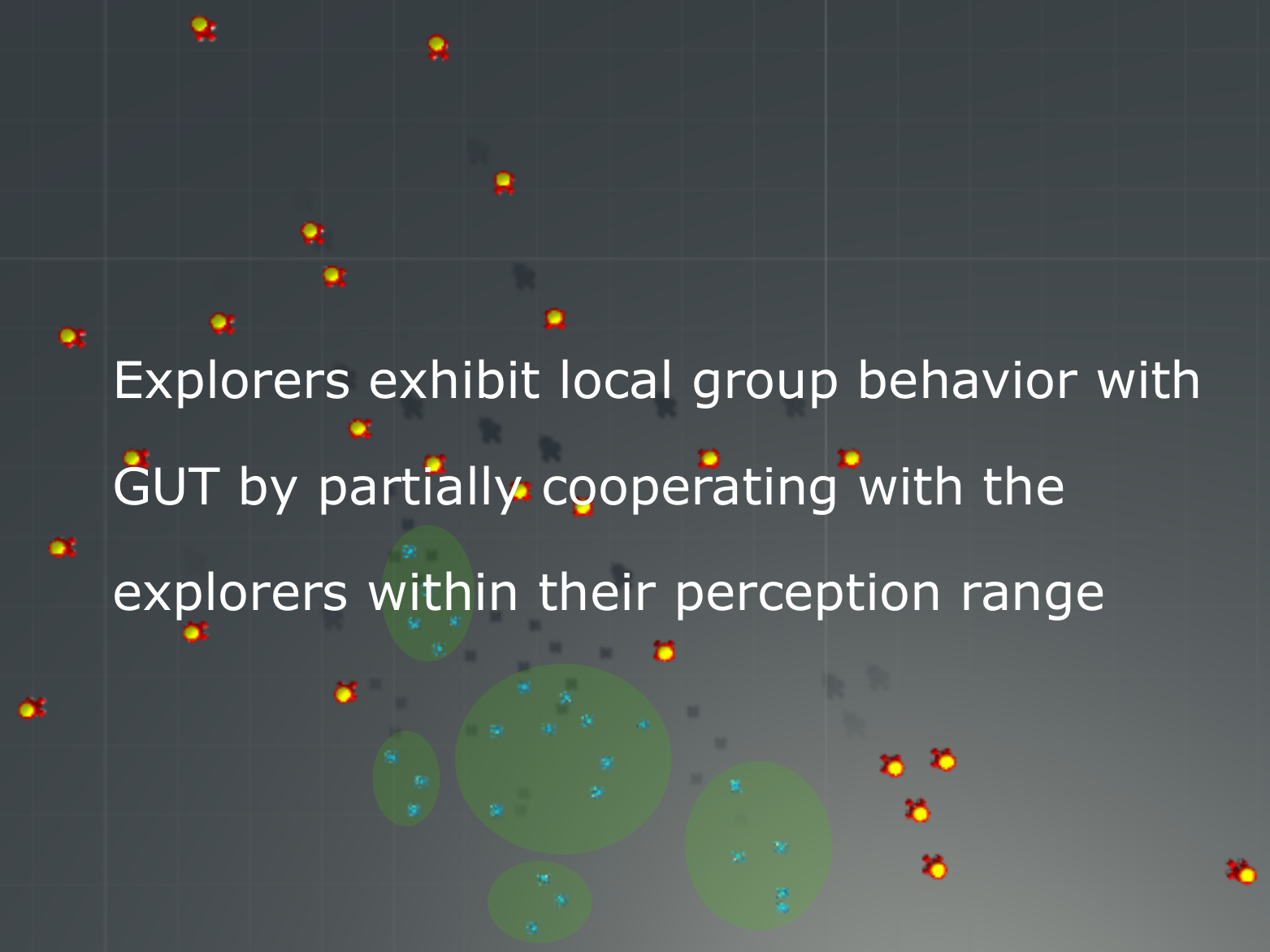}\vspace{0pt}
\caption{\small{GUT (PC)}}
\label{fig: gut_pc}
\end{minipage}
\begin{minipage}[b]{0.245\linewidth}
\includegraphics[width=1\textwidth]{./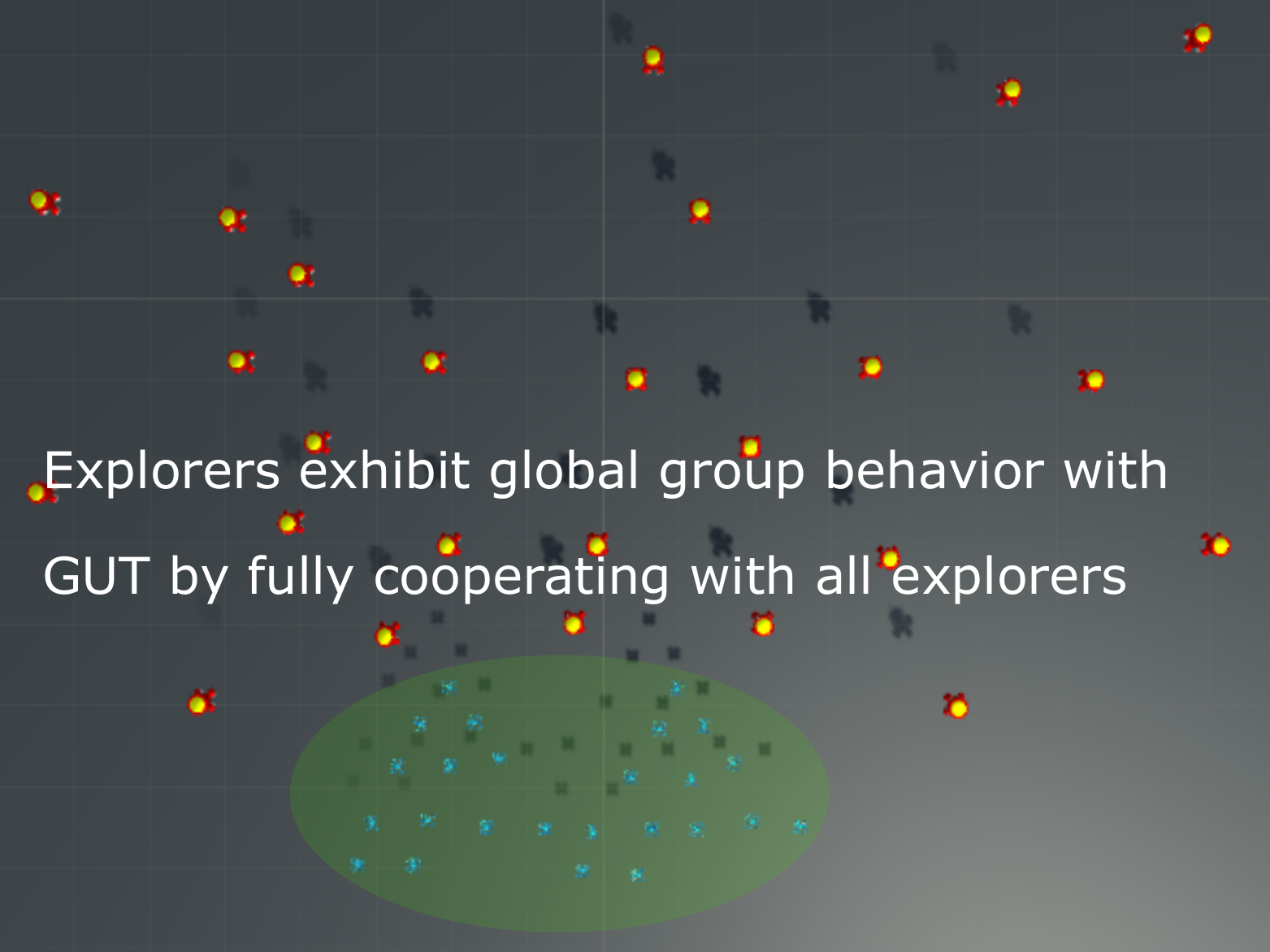}\vspace{0pt}
\caption{\small{GUT (FC)}}
\label{fig: gut_fc}
\end{minipage}
\vspace{-5mm}
\end{figure*}

\subsection{Compared Scenarios and Methods}
\label{sec:methods}

\begin{figure*}[t]
\centering
    \subfigure[Explorer Average HP Cost]{
    \begin{minipage}[t]{0.32\linewidth}
 \includegraphics[width=1\textwidth]{./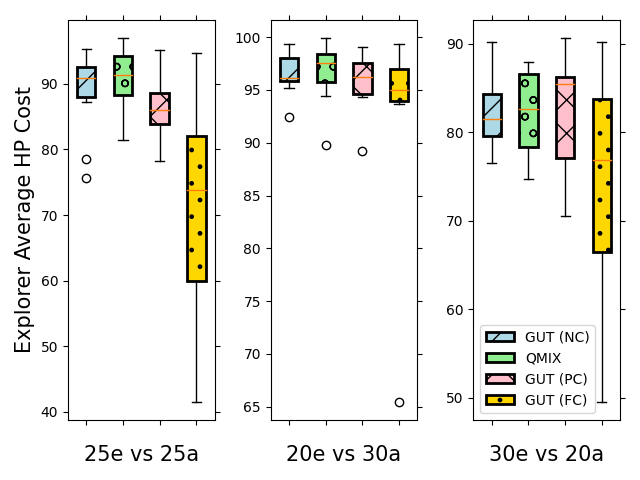}
 \vspace{-2mm}
 \label{fig: eahc}
 \end{minipage}}
 \subfigure[Kill Per Alien - Explorer Lost]{
    \begin{minipage}[t]{0.32\linewidth}
 \includegraphics[width=1\textwidth]{./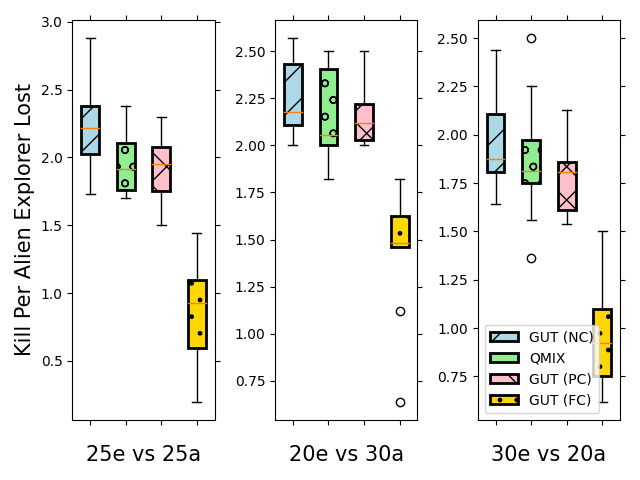}
 \vspace{-2mm}
 \label{fig: kpmel}
 \end{minipage}}
 \subfigure[Kill Per Alien - HP Cost]{
    \begin{minipage}[t]{0.32\linewidth}
    \includegraphics[width=1\textwidth]{./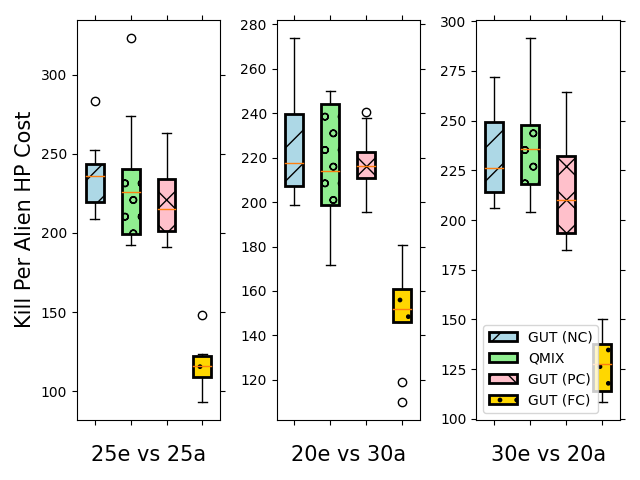}
    \vspace{-2mm}
    \label{fig: kpmhc}
    \end{minipage}}
    \vspace{-3mm}
\caption{\small{The performance of explorers in interaction experiments with different proportions (e-explorers, a-aliens).}}
\label{fig: interactive_experiment}
\vspace{-2mm}
\end{figure*}

We evaluate \textit{GUT} from two different scenarios: 1) \textit{Interaction Experiments} compares the performance of explorers' cooperative strategies between \textit{GUT} and \textit{Greedy/QMIX} methods; 2) \textit{Information Prediction} demonstrates the \textit{GUT} when different predictive models are implemented to estimate aliens' states.
Further, we analyze \textit{GUT} by simulating various cooperation methods, comparing the performance with the state-of-the-art greedy-based approaches - QMIX \cite{pmlr-v80-rashid18a}.

\paragraph*{1) GUT (NC)} [\textit{Noncooperation} - Self-interest]
In this situation, explorers adopt \textit{GUT} computing the winning rate based on its perception, but no communication, which means that it does not get the consistency to attack or defend the specific alien (Fig.~\ref{fig: qmix_gut}).

\paragraph*{2) Greedy/QMIX (PC)} [\textit{Partial Cooperation}]
QMIX is a value-based RL method applied to MAS \footnote{Here, we only focus on the decision-making part of QMIX, which considers the global benefit yielding the same result as a set of individual rewards.}. It allows each agent to participate in a decentralized execution solely by choosing greedy actions for its rewards. Accordingly, we assume that each explorer can cooperate, communicate, and share information with its observing explorers. Then through calculating the corresponding win rate based on the number of observed explorers and aliens, it chooses to attack or defend the specific \textit{hp lowest} target (Fig.~\ref{fig: qmix}).

\paragraph*{3) GUT (PC)} [\textit{Partial Cooperation}]
Here, we consider the same situation as \textit{QMIX}, but explorers calculate the winning rate with \textit{GUT} with consistency in strategic decisions (Fig.~\ref{fig: gut_pc}).

\paragraph*{4) GUT (FC)} [\textit{Full Cooperation} - Collective Rationality]
In this approach, we assume each explorer work and make decisions with \textit{GUT} in full communication mode. This enables every group member to achieve consensus on the  decisions in a distributed system, prioritizing \textit{group's interest} (Fig.~\ref{fig: gut_fc}).

Especially, in partial cooperation, agents cooperate with their neighbors, which they can sense, to work as a team. As shown in Fig. \ref{fig: gut_pc}, the entire group has been divided into five subgroups, and each subgroup only optimizes its expected return within it. In contrast, all agents work as one group in full cooperation, shown in Fig. \ref{fig: gut_fc}, and optimize the whole team' expected return.
\begin{table}[h]
 \caption{\small{Winning Rate Results of the Interaction Experiments.}}
    \label{tab: wrc}
\resizebox{\columnwidth}{!}{\begin{tabular}{|c|c|c|c|c|}
    \hline
    \diagbox{PRD}{Winning Rate}{APP} & GUT (NC) & QMIX (PC) & GUT (PC) & GUT (FC)\\ \hline
    20e vs 30a & 40\% & 50\% & 50\% & 70\% \\ 
    \hline
 25e vs 25a & 90\% & 100\% & 100\% & 100\% \\ 
 \hline
 30e vs 20a & 100\% & 100\% & 100\% & 100\% \\ 
 \hline
 \end{tabular}}
\end{table}

\subsection{Interaction Experiments}
In these experiments, the environment is free of obstacles.
We consider three different ratios (A/E) between the number of aliens and explorers as follows: \textit{20 explorers vs 30 aliens}, \textit{25 explorers vs 25 aliens} and \textit{30 explorers vs 20 aliens}. For each scenario, we conduct ten simulation trials for each proportion with the same environment setting \footnote{We assume that an agent can detect opponents' current state in its perception range.}. In these experiments, the aliens follow a self-interest (noncooperation) random action strategy, while the explorers follow either the GUT or Greedy strategy with different cooperation mentioned in Sec.~\ref{sec:methods}.

\paragraph*{Results}

Fig.~\ref{fig: interactive_experiment} presents the performance results in the interaction experiments. We can see that \textit{GUT} (FC) has the best performance compared with other cases in terms of low HP costs and loss of explorers to win against aliens. 
Table. \ref{tab: wrc} shows the winning rate, which also reflects similar results.

Specifically, the GUT (NC), QMIX, and GUT (PC) do not have much difference between in \textit{explorer average HP cost} results (Fig. \ref{fig: eahc}), but in \textit{num. of explorers lost for killing an alien} (Fig.~\ref{fig: kpmel}) and \textit{HP cost for killing an alien} (Fig.~\ref{fig: kpmhc}), the QMIX and GUT (PC) show some advantage comparing with GUT (NC). 
The results show that cooperation conduces to decrease the costs and boost the winning rate for more challenging tasks. More importantly, GUT can help agents represent more complex group behaviors and strategies, such as forming various shapes and separating different groups adapting adversarial environments in MAS cooperation. It vastly improves system performance, adaptability, and robustness. Besides, communication plays an essential role in cooperation, such as solving conflicts and getting consistency through negotiation. In GUT (NC) and QMIX, agents only share local information about the number of observing agents for naive attacking or defending behaviors. However, GUT (FC) presents more complex relationships between agents' cooperation by organizing global communication data.

\begin{figure*}[t]
\centering
\begin{minipage}[b]{0.245\linewidth}
\includegraphics[width=1\textwidth]{./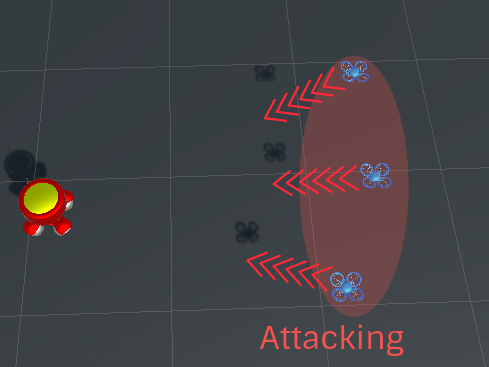}\vspace{0pt}
\caption{\small{One Alien QMIX}}
\label{fig: qmix_compare}
\end{minipage}
\begin{minipage}[b]{0.245\linewidth}
\includegraphics[width=1\textwidth]{./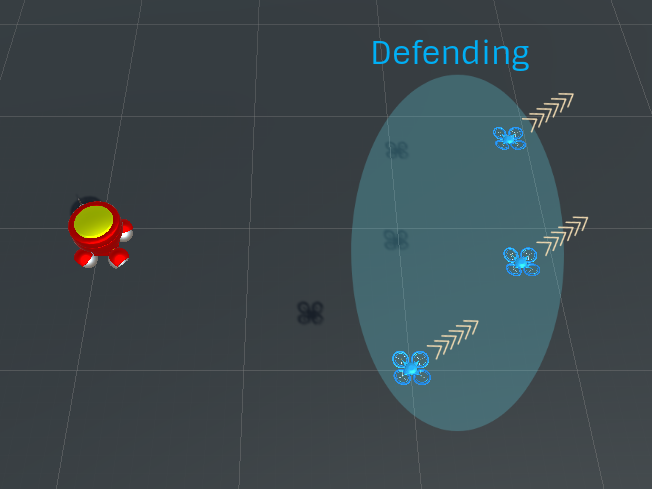}\vspace{0pt}
\caption{\small{One Alien GUT}}
\label{fig: gut_compare}
\end{minipage}
\begin{minipage}[b]{0.245\linewidth}
\includegraphics[width=1\textwidth]{./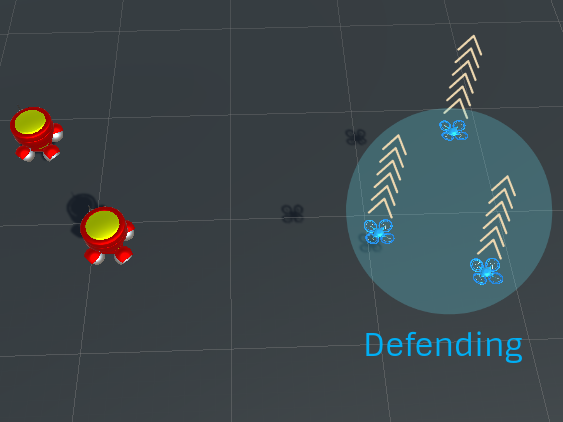}\vspace{0pt}
\caption{\small{Two Aliens QMIX}}
\label{fig: 2qmix_compare}
\end{minipage}
\begin{minipage}[b]{0.245\linewidth}
\includegraphics[width=1\textwidth]{./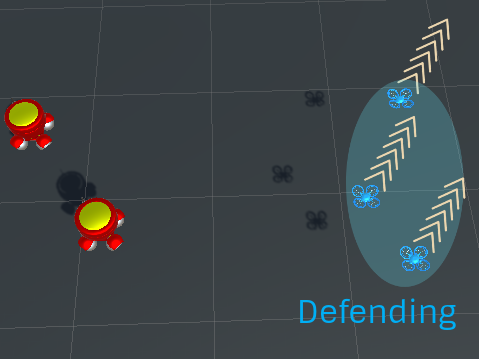}\vspace{0pt}
\caption{\small{Two Aliens GUT}}
\label{fig: 2gut_compare}
\end{minipage}
\vspace{-5mm}
\end{figure*}
\begin{figure*}[t]
\centering
 \subfigure[\small{Explorer Average Energy Cost}]{
    \begin{minipage}[t]{0.32\linewidth}
 \includegraphics[width=1\textwidth]{./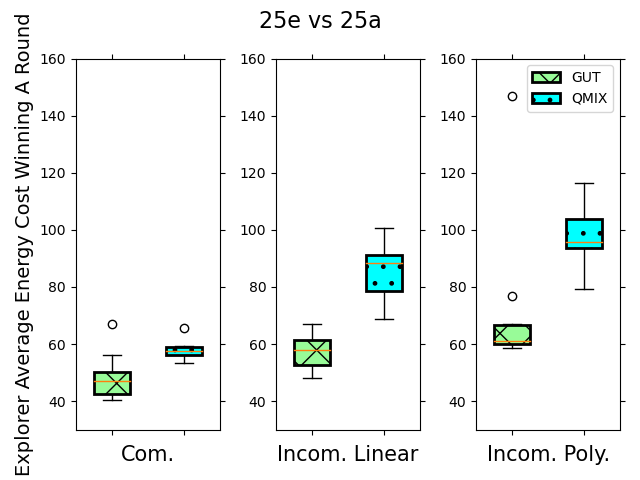}
 \vspace{-2mm}
 \label{fig: kpmel_pre_mqmix}
 \end{minipage}}
     \subfigure[\small{Explorer Average HP (Health) Cost}]{
    \begin{minipage}[t]{0.32\linewidth}
 \includegraphics[width=1\textwidth]{./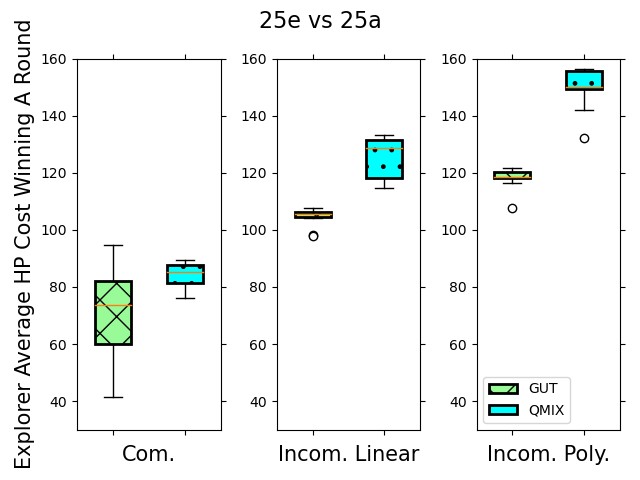}
 \vspace{-2mm}
 \label{fig: eahc_pre_mqmix}
 \end{minipage}}
 \subfigure[\small{Number of explorers lost per winning round}]{
    \begin{minipage}[t]{0.32\linewidth}
    \includegraphics[width=1\textwidth]{./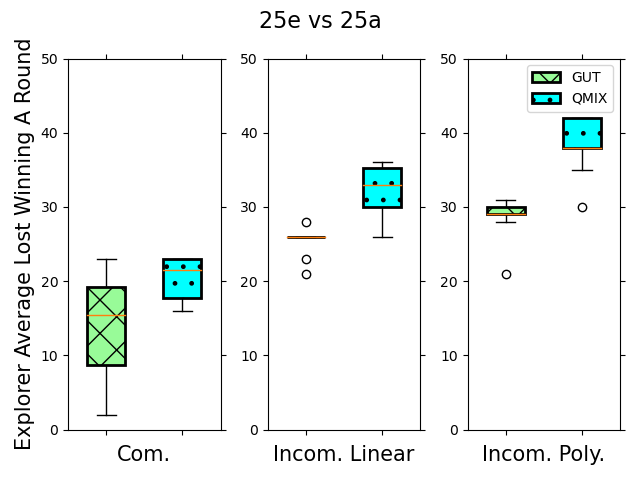}
    \vspace{-2mm}
    \label{fig: kpmhc_pre_mqmix}
    \end{minipage}}
    \vspace{-2mm}
\caption{\small{Explorers' performance results with different predictive models with obstacles in the environment.}}
\label{fig: parameter_estimation_experiment_mqmix}
\vspace{-2mm}
\end{figure*}

To highlight the difference in how the GUT and Greedy/QMIX approaches work, let us consider an example scenario where an explorer team has only two strategies, attack and defend. For the QMIX approach, selecting attacking or defending relates to the number of the partial group and opponents and their attacking ability directly. However, for the GUT, they need to compute the \textit{Nash Equilibrium} on all possible strategies both sides can take. The payoff values are calculated based on relative shared information before selecting a suitable strategy. It means that agents might choose a different strategy in the same condition, comparing QMIX and GUT. 

For example, in a scenario with one adversary, the greedy method chooses to attack the adversary (Fig. \ref{fig: qmix_compare}), while the GUT chooses to defend against the adversary (Fig. \ref{fig: gut_compare}). 
However, in a scenario with two adversaries, QMIX (Fig. \ref{fig: 2qmix_compare}) and GUT (Fig. \ref{fig: 2gut_compare}) have the same strategy reacting to the adversaries. It leads to more costs for the whole system, especially the loss of health (HP), which increases the system's instability.
The two examples explain that instead of choosing greedy actions for group rewards in some scenarios, we need to consider various needs within the team and individual and balance them in a long round. It can improve the sustainability and robustness of the group, especially in uncertain environments.

\subsection{Information Prediction}
\label{sec:prediction}
We design two kinds of perceiving models to analyze individual and system performance. 1) In {\textit{complete information}} scenario, if an agent can perceive an adversary, it can detect the adversary's status, such as the unit attacking energy cost and energy level. 2) In {\textit{incomplete information}} scenario, an agent can not gain opponents' state in its observable range. 

Assuming that adversary's unit HP cost \textit{HP$_{uc}$} and average system cost \textit{HP$_{asc}$} can be perceived. Then, we use two prediction models (\textit{Linear} regressor - Eq. \eqref{linear_regression} and \textit{Polynomial regressor} - Eq. \eqref{polynomial_regression}) to predict adversaries' unit attacking energy cost $E_{uc}$ and current energy level $E_{el}$ based on the corresponding HP costs of the adversary. \footnote{Here, $\beta$ represents corresponding regression coefficients ($\beta_{uc_{0,1,2}}$ = \{0.08, 0.03, 0.0001\}, $\beta_{asc_{0,1,2}}$ = \{0.03, 0.0003, 0.00001\}), $\varepsilon$ represents a Gaussian noise $\mathcal{N}(0,1)$.}
\begin{equation}
\begin{split}
    & E_{uc} = HP_{uc} \times \beta_{uc_0} + \varepsilon; \\
    & E_{el} = 100 - HP_{asc} \times \beta_{asc_0} + \varepsilon.
    \label{linear_regression}
\end{split}
\end{equation}
\begin{equation}
\begin{split}
    & E_{uc} = HP_{uc}^2 \times \beta_{uc_2} + HP_{uc} \times \beta_{uc_1} + \varepsilon; \\
    & E_{el} = 100 - HP_{asc}^2 \times \beta_{asc_2} -  HP_{asc} \times \beta_{asc_1} + \varepsilon.
    \label{polynomial_regression}
\end{split}
\end{equation}

\subsubsection{Environment without obstacles}
In this scenario, we consider five proportions of explorers and aliens distributed on the map randomly. For each ratio, we also conduct ten simulation trials with the same experimental setting. 
\paragraph*{Results} 
Table~\ref{tab:Strategy_Comparision} shows the results of these experiments. From an agent's perspective, the \textit{Linear Regression} model has more accuracy than the \textit{Polynomial Regression} model, comparing with the result trend of \textit{Complete Information} (ground truth). From a system perspective, the win rate and the mean energy/HP cost with different predictive models show similar results. 

\subsubsection{Environment with obstacles}
In this scenario, we consider a more complex environment, where there are obstacles (two mountains in the experiment setting) as well as adversaries. 
Here, the aliens adopt the {Greedy/QMIX} approach to make their individual decision, while explorers use either QMIX or GUT. 
We fix the number of explorers (E=25) and aliens (A=25) and conduct ten trials for each predictive model. 
To implement this scenario, we designed an obstacle avoidance algorithm, which can help agents collectively avoid obstacles by adapting their edge's trajectory until it finds a suitable route to the goal. 

\begin{table}[t]
\caption{\small{System Utility Comparison. Ra: Ratio of Explorers to Aliens,  WR: Winning Rate, C$_{s_e/w}$: system average energy cost winning a round, C$_{s_{hp}/w}$}: system average HP cost winning a round.}
\label{tab:Strategy_Comparision}
\begin{center}
\resizebox{\columnwidth}{!}{\begin{tabular}{|c|c|c|c|c|c|c|c|c|c|}
\hline 
\multirow{2}*{Ra} &  \multicolumn{3}{c|}{Complete Info} &  \multicolumn{3}{c|}{Incomplete - Linear Prediction} & \multicolumn{3}{c|}{Incomplete - Poly Prediction}\\
\cline{2-10}
~&WR&C$_{s_e/w}$&C$_{s_{hp}/w}$&WR&C$_{s_e/w}$&C$_{s_{hp}/w}$&WR&C$_{s_e/w}$&C$_{s_{hp}/w}$\\
\hline
\multicolumn{10}{c}{Environment without obstacles} \\
\hline
20:30&70\%&1077.37&2649.80&30\%&2367.14&6306.90&30\%&2726.64&6216.44\\
\hline
20:25&90\%&818.63&2027.98&50\%&1414.84&3807.23&40\%&1375.83&4824.64\\
\hline
25:25&100\%&1211.09&1772.00&90\%&1432.06&2606.12&80\%&1789.15&2949.89\\
\hline
25:20&100\%&1414.35&1739.78&100\%&1449.07&1960.45&100\%&1472.46&2177.52\\
\hline
30:20&100\%&1608.18&2241.09&100\%&2041.85&2370.76&100\%&1961.86&2271.48\\
\hline
\multicolumn{10}{c}{Environment with obstacles} \\
\hline
25:25&100\%&1443.85&2110.91&70\%&2144.10&3143.63&60\%&2451.37&3742.98\\
\hline
\end{tabular}}
\vspace{-4mm}
\end{center}
\end{table}

\paragraph*{Results}
Through the performance results (normalized per explorer) shown in Fig.~\ref{fig: parameter_estimation_experiment_mqmix}, we notice that due to obstacles (unintentional adversaries) involved, agents' average HP and energy cost for winning a round increase distinctly. 
Especially, Fig. \ref{fig: kpmhc_pre_mqmix} describes the average number of explorers sacrificed in the game to win in a round as a team against adversaries. 
The data show that the \textit{Linear Regression} model presents higher accuracy than the {\it Polynomial} model because the current energy and HP calculation models do not involve nonlinear factors. 
Table. \ref{tab:Strategy_Comparision} showing the system-level performance results reveal a similar conclusion that obstacles lead to the decrease of the winning rate and more system costs compared to the scenario without obstacles.

\begin{figure*}[t]
\centering
\begin{minipage}[b]{0.314\linewidth}
\includegraphics[width=1\textwidth]{./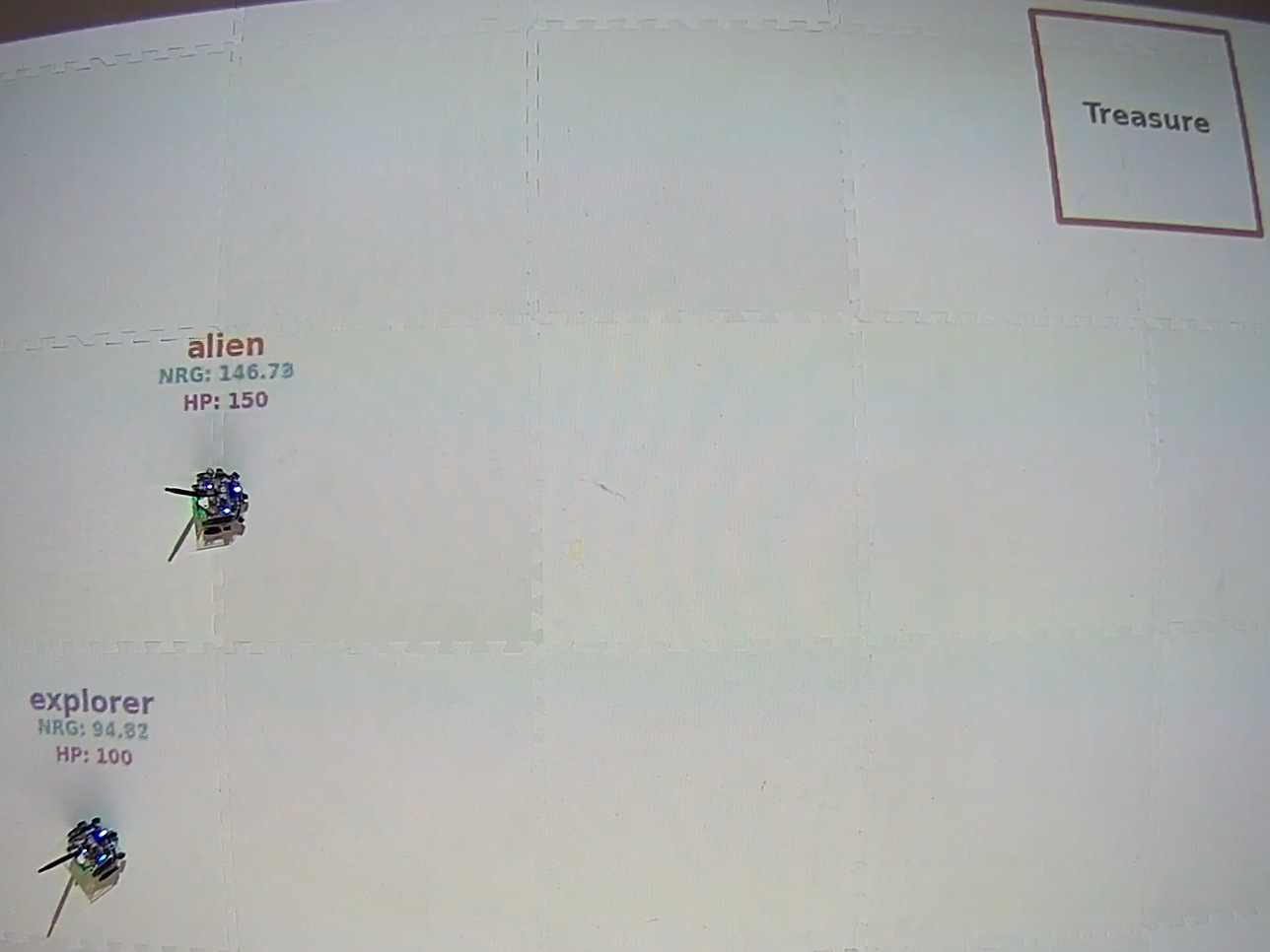}\vspace{0pt}
\caption{\small{1 explorer vs 1 alien}}
\label{fig: 1evs1a}
\end{minipage}
\begin{minipage}[b]{0.314\linewidth}
\includegraphics[width=1\textwidth]{./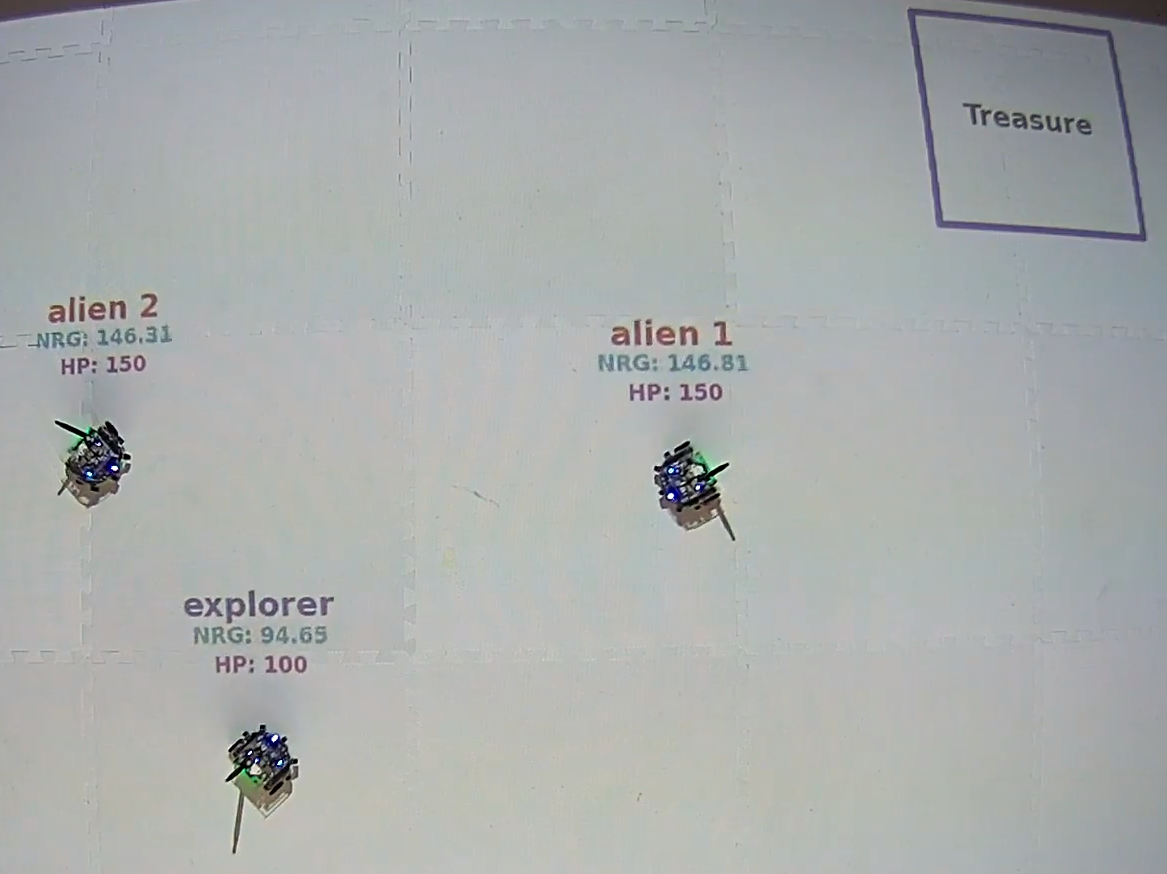}\vspace{0pt}
\caption{\small{1 explorer vs 2 aliens}}
\label{fig: 1evs2a}
\end{minipage}
\begin{minipage}[b]{0.357\linewidth}
\includegraphics[width=1\textwidth]{./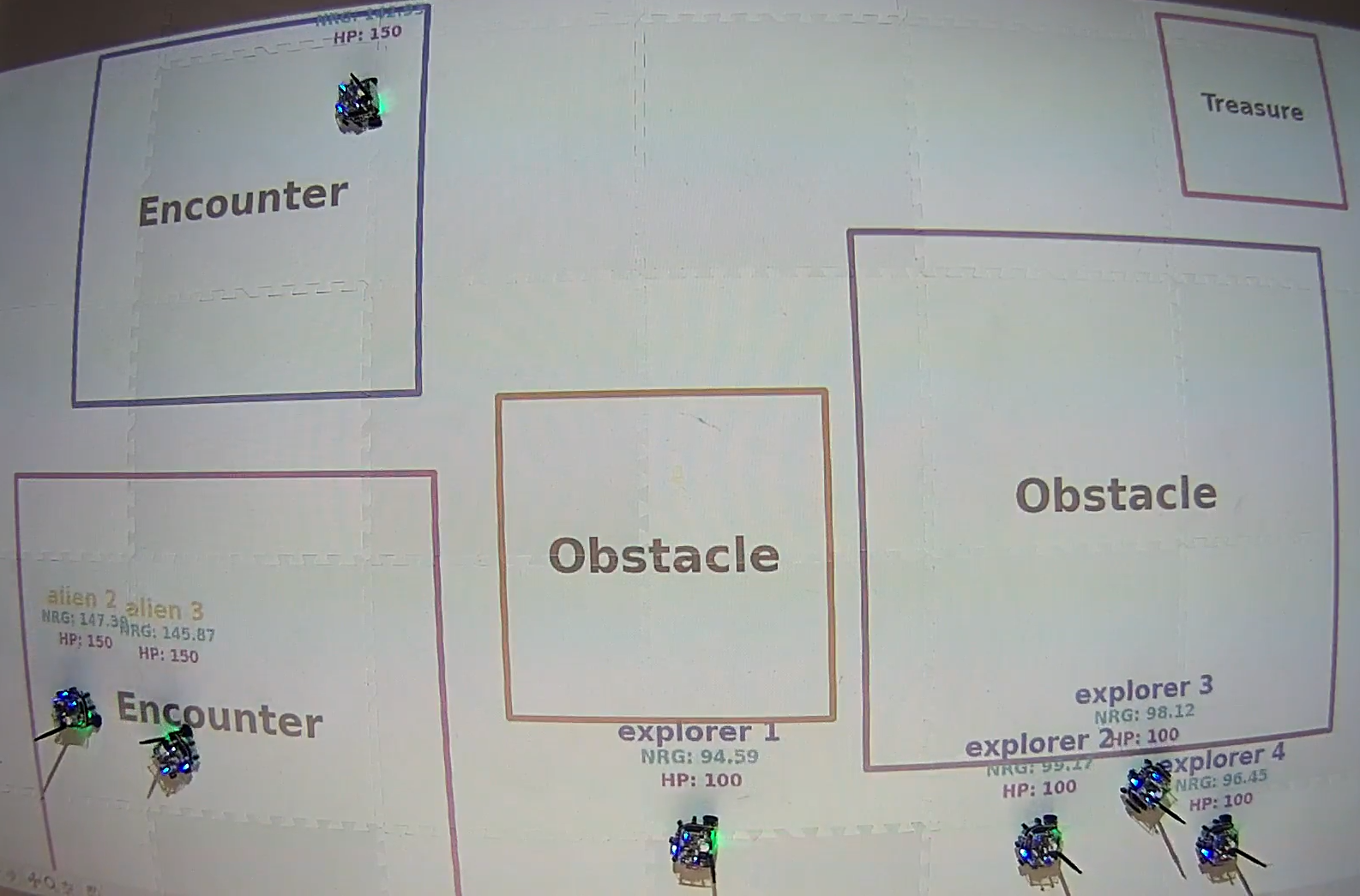}\vspace{0pt}
\caption{\small{4 explorers vs 3 aliens}}
\label{fig: 4evs3a}
\end{minipage}
 \vspace{-5mm}
\end{figure*}
\begin{figure*}[t]
\centering
    \subfigure[Explorer Average Energy Cost]{
    \begin{minipage}[t]{0.32\linewidth}
 \includegraphics[width=1\textwidth]{./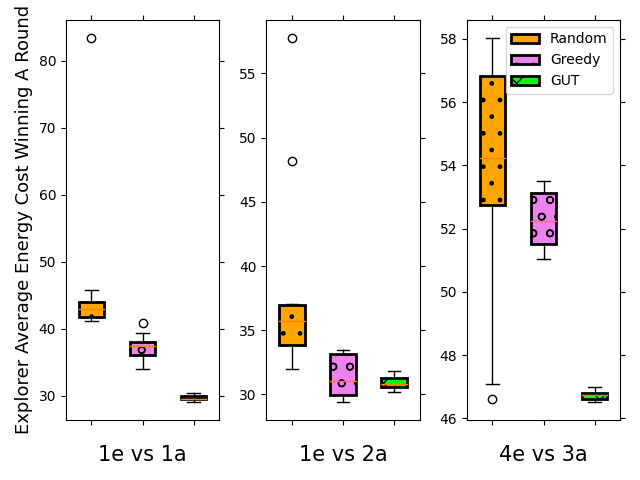}
 \vspace{-2mm}
 \label{fig: explore_energy}
 \end{minipage}}
 \subfigure[Explorer Average HP Cost]{
    \begin{minipage}[t]{0.32\linewidth}
 \includegraphics[width=1\textwidth]{./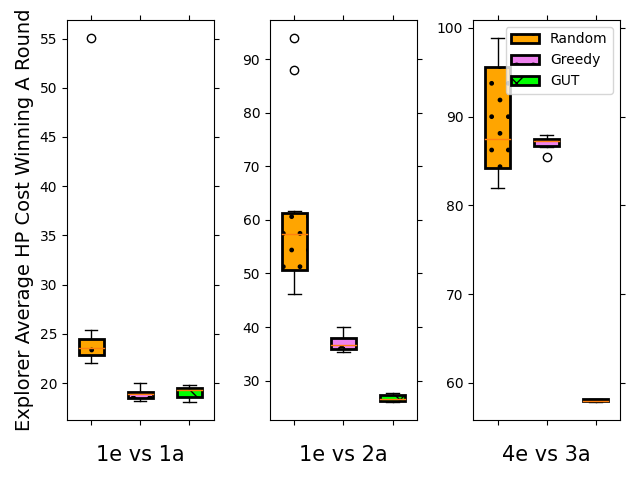}
 \vspace{-2mm}
 \label{fig: explore_hp}
 \end{minipage}}
 \subfigure[Explorer Time Cost]{
    \begin{minipage}[t]{0.32\linewidth}
    \includegraphics[width=1\textwidth]{./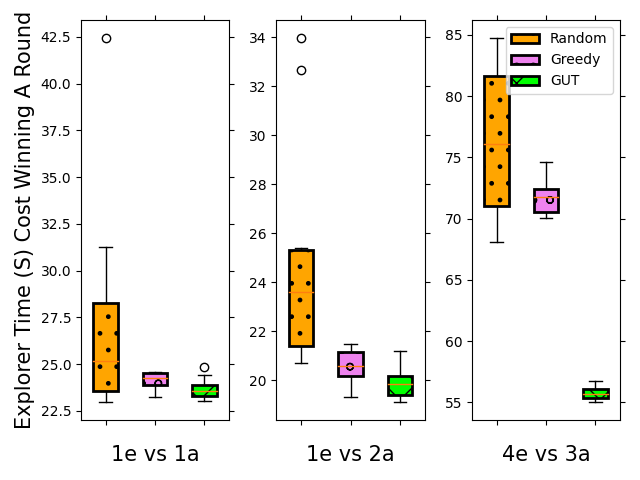}
    \vspace{-2mm}
    \label{fig: explore_time}
    \end{minipage}}
    \vspace{-2mm}
\caption{\small{The Performance Results in the Robotarium Experiments with Different Proportion of the Explore Domain.}}
\label{fig: robotarium_experiment}
\vspace{-4mm}
\end{figure*}

\section{Robotarium Experiments}
\label{sec:robotarium}
To demonstrate the GUT on the multi-robot applications, we implement our method in the Robotarium \cite{pickem2017robotarium} platform, a remote-accessible multi-robot experiment testbed that supports controlling up to 20 robots simultaneously on a 3.2m $\times$ 2.0m large rectangular area. Each robot has the dimensions 0.11 m $\times$ 0.1 m $\times$ 0.07 m in the testbed. The platform also provides a simulator helping users test their code, which can rapidly prototype their distributed control algorithms and receive feedback about their implementation feasibility before sending them to be executed by the robots on the Robotarium. 

In the Robotarium experiments, we consider three different propositions between the number of explorers and aliens in the \textit{Explore Game} domain. They are one \textit{explorer vs. one alien} (Fig. \ref{fig: 1evs1a}), \textit{one explorer vs. two aliens} (Fig. \ref{fig: 1evs2a}), and \textit{four explorers vs. three aliens} (Fig. \ref{fig: 4evs3a}). To highlight the difference between each experiment, we do not consider any obstacles in the first two scenarios. The third scenario involves (simulated) obstacle regions and two different adversarial regions (Encounters).

\begin{table}[ht]
\begin{minipage}{0.48\linewidth}
 \caption{{Level 2 payoff matrix for single explorer.}}
    \label{tab:robotarium_second_level_single}
    \resizebox{\linewidth}{!}{
    \begin{tabular}{|c|c|c|}
    \hline
    \diagbox{AT}{Utility}{ET} & $\Delta$Speed & $\Delta$Direction \\ \hline
    Follow & $HP_{SF}$ & $HP_{DF}$ \\ \hline
    Retreat & $HP_{SR}$ & $HP_{DR}$ \\ \hline
 \end{tabular}}
 \end{minipage}%
 \hfill
 \begin{minipage}{0.48\linewidth}
 \caption{{Level 2 payoff matrix for multiple explorers.}}
\label{tab:robotarium_second_level_multiple}
    \resizebox{\linewidth}{!}{
    \begin{tabular}{|c|c|c|}
    \hline
    \diagbox{AT}{Utility}{ET} & Triangle & Diamond \\ \hline
    Follow & $HP_{TF}$ & $HP_{DF}$ \\ \hline
    Retreat & $HP_{TR}$ & $HP_{DR}$ \\ \hline
 \end{tabular}}
 \end{minipage}%
 \vspace{-2mm}
\end{table}

Our Robotarium experiments consider four different strategies (Set $S_e$) for the explorer team:
\textit{attacking and changing direction}, \textit{attacking and changing speed}, \textit{defending and changing direction}, and \textit{defending and changing speed}. We decompose this strategy set into two levels for GUT implementation in Robotarium: Level 1 considers deciding attack or defend (Table~\ref{tab:first_level_matrix}); and Level 2 considers changing direction or speed (Table~\ref{tab:robotarium_second_level_single}) for a single explorer game while it considers triangle or diamond formation shape (Table~\ref{tab:robotarium_second_level_multiple}) for a multiple explorer game. Two different tactics payoff matrices are designed in Level 2 to differentiate the strategies between single-agent and Multi-Agent cooperation.

We compare our GUT approach with two different baseline approaches: 1) Random value selection approach (Eq. \eqref{eqn:random_value}), which chooses a strategy from a set $S_e$ by maximizing a reward function with two random variables; 2) Greedy algorithm \cite{vince2002framework} (Eq. \eqref{eqn:greedy_reward}), which maximizes the combined utilities of win rate and HP metric together (equivalent to a one-level GUT).
\footnote{Here, $s_{i_1}$ and $s_{i_2}$ represent the energy-dependent and health-dependent random reward values of strategy $i$, respectively. They follow the Gaussian distributions with different expectations. $d$ is the distance between explorer and goal point. $n_a$ and $u_{hp}$ are the number of active aliens and their average unit attacking damage cost. $c_1$ and $c_2$ are the corresponding coefficients. $W_i$ and $HP_i$ are the winning and HP utility values of strategy $i$ from Sec.~\ref{sec:domain}.}
\begin{eqnarray}
\begin{split}
    & s_e^* = \mathop{\argmax_{i \in S_e}}\mathop{\left[100 - c_1 \cdot s_{i_1} \cdot d - c_2 \cdot s_{i_2} \cdot n_a \cdot u_{hp}\right]}     \label{eqn:random_value}\\
    & s_e^* = \mathop{\argmax_{i \in S_e}}\mathop{\left[W_{i} \cdot HP_i \right]}
    \label{eqn:greedy_reward}
\end{split}
\end{eqnarray}

We implement each case with real robots in the Robotarium and conduct ten simulation trials (rounds) for each scenario in the Robotarium simulator. 
We initialize each robot in the same group (explorer or adversary) with equal energy (battery) levels and health points (HP); for example, explorer with 100 points and alien with 150 points in Energy and HP correspondingly. Also, we assumed that every moving step and attack damage cost 0.1\% of energy and 0.3\% of HP, respectively. Other settings are similar to the previous experiments.

\paragraph*{Results and Discussion}
Fig.~\ref{fig: robotarium_experiment} presents the results of the Robotarium experiments. The average costs of energy, HP, and the time taken to complete the mission of explorer are shown in Figs.~\ref{fig: explore_energy}, \ref{fig: explore_hp}, and \ref{fig: explore_time}, respectively. The data shows that in the single explorer cases (scenarios \textit{1e vs. 1a} and \textit{1e vs. 2a}), both GUT and greedy stand out compared to the random approach, but the GUT is not significantly better than the greedy approach consistently in all the performance metrics. We attribute this to the fact that GUT uses only two levels of action decomposition and therefore does not offer significant advantages compared to the one-level GUT/greedy approach in this simple scenario.
However, for the multiple explorer case (scenario \textit{4e vs. 3a}), the GUT shows superior performance over other methods in all metrics, demonstrating that the GUT can help MAS organize their behaviors and select a suitable strategy adapting to complex situations. We can get a similar observation from the perspective of the winning rate and the average number of explorers lost from the Table. \ref{tab: robot_wrc}.

\begin{table}
\caption{Performance results in Robotarium experiments.}
\label{tab: robot_wrc}\resizebox{\columnwidth}{!}{\begin{tabular}{|c|c|c|c|c|c|c|}
    \hline
    \multirow{2}*{\diagbox{PRD}{APP}} & \multicolumn{3}{c|}{Winning Rate} &  \multicolumn{3}{c|}{Lost Explorers Per Round/win} \\
    \cline{2-7}
    ~&Random&Greedy&GUT&Random&Greedy&GUT \\
    \hline
    1e vs 1a & 100\% & 100\% & 100\% & - & - & - \\ 
    \hline
    1e vs 2a & 90\% & 100\% & 100\% & 0.1 & - & - \\ 
    \hline
    4e vs 3a & 50\% & 90\% & 100\% & 2.8 & 2.0 & - \\ 
    \hline
\end{tabular}}
\vspace{-4mm}
\end{table}

Generally speaking, by demonstrating the \textit{Explore} domain in the Robotarium, we further proofed that the GUT could help the intelligent agent (robot) rationally analyze different situations in real-time, effectively decompose the high-level strategy into low-level tactics, and reasonably organize groups' behaviors to adapt to the current scenario. From the system perspective, through applying the GUT, the group can present more complex strategies or behaviors to solve the dynamic changing issues and optimize or sub-optimize the group utilities in MAS cooperation. From the individual agent perspective, GUT reduces the agent's costs and guarantees sustainable development for each group member, much like human society.

\section{Conclusion}
\label{sec:conclusions}

We introduce a new Game-theoretic Utility Tree (GUT) for Multi-Agent decision-making in adversarial scenarios and present an example real-time strategy game called \textit{Explore Domain}.
Through extensive numerical simulations, we analyze GUT and compare it against a state-of-the-art cooperative decision-making approach, such as the greedy selection method in QMIX. We verified the effectiveness of GUT through two types of experiments involving interaction and information prediction between the agents. The results showed that the GUT could organize more complex relationships among MAS cooperation, helping the group achieve more challenging tasks with lower costs and higher winning rates. 
Further, we proofed the effectiveness of the GUT in the real robot application through the implementation of the \textit{Explore Domain} on the Robotarium.
It demonstrated that the GUT could effectively organize Multi-Agent cooperation strategies, enabling a group with fewer advantages to achieve higher performance.

\bibliographystyle{ACM-Reference-Format}
\bibliography{sample-bibliography} 

\end{document}